\documentclass[letterpaper,twocolumn,10pt]{article}
\usepackage{usenix2019_v3}

\usepackage{amsmath,amssymb,amsfonts,amsthm}
\usepackage{graphicx}
\usepackage{booktabs}
\usepackage{tabularx}
\usepackage{array}
\usepackage{float}
\usepackage{listings}
\usepackage{algorithm}
\usepackage{algpseudocode}
\usepackage{subcaption}
\usepackage{xspace}
\usepackage{tikz}
\usetikzlibrary{arrows.meta, positioning, calc, shapes.geometric, fit, shadows, matrix}

\microtypesetup{spacing=false}

\AtBeginDocument{\DeclareMathAlphabet{\mathcal}{OMS}{cmsy}{m}{n}}

\newtheorem{proposition}{Proposition}
\newtheorem{definition}{Definition}

\newcolumntype{L}[1]{>{\raggedright\arraybackslash}p{#1}}
\newcolumntype{C}[1]{>{\centering\arraybackslash}p{#1}}
\newcolumntype{R}[1]{>{\raggedleft\arraybackslash}p{#1}}

\definecolor{navy}{RGB}{20,50,110}
\definecolor{crimson}{RGB}{180,30,30}
\definecolor{slate}{RGB}{112,128,144}
\definecolor{emerald}{RGB}{34,139,34}
\definecolor{darkpurple}{RGB}{102,51,153}
\definecolor{sublayer}{RGB}{235,240,250}
\definecolor{kernellayer}{RGB}{250,235,235}

\hypersetup{
  pdftitle={Stored Is Not Supported: Typed Provenance and Assertion Guardrails for Persistent AI Agents},
  pdfauthor={Jun He; Deying Yu},
  pdfsubject={Typed Provenance, Assertion Guardrails, Persistent AI Agents},
  pdfkeywords={persistent AI agents, autobiographical claims, epistemic guardrails, typed provenance, assertion mediation, memory poisoning, AI safety}
}

\begin{document}

\title{\bf Stored Is Not Supported:\\
Typed Provenance and Assertion Guardrails for Persistent AI Agents}

\author{
  {\rm Jun He}\\
  OpenKedge.io
  \and
  {\rm Deying Yu}\\
  OpenKedge.io
}

\maketitle

\begin{abstract}
Persistent AI agents construct autobiographical state through reflection, retrieval, and
consolidation. Persistence changes availability, not epistemic standing: stored or retrieved
material is not thereby supported. Untrusted inputs, prompt injections, and model inferences can
therefore enter persistent state and later be presented as agent history or user commitments. We
specify typed provenance and assertion guardrails for \emph{autobiographical assertion boundedness},
a system-relative release property requiring governed statements about the agent, user, or named
relationships to satisfy accepted-evidence, temporal-validity, and disclosure policies. A typed provenance graph separates origin, dependency
lineage, epistemic role, validity, and disclosure scope. A resolver evaluates authorized state
projections and returns one evidential status, orthogonal conflict, staleness, and withholding flags,
and a protected decision witness. A generate--verify--revise mediator then checks candidate semantic units before release
and renders policy-authorized status responses. Under explicit assumptions about extraction,
predicate correctness, resolution soundness, view declassification, and channel mediation, we prove
a conditional assertion-boundedness contract. In an executable suite of 24 hand-authored
conformance cases, typed mediation passed none of 19 unsafe opportunities unqualified while
preserving all five supported controls. The flat/prior and source-tag comparison rules released
19/19 and 18/19 unsafe candidates, respectively. These results validate the encoded resolver
and mediator obligations; they do not constitute an end-to-end evaluation of language models or
retrieval systems.
\end{abstract}

\section{Introduction}
\label{sec:introduction}

Autonomous AI agents increasingly maintain long-term state across extended interactions,
employing hierarchical memory architectures that combine periodic reflection, semantic
retrieval, and background consolidation~\cite{park2023generative,packer2023memgpt,xu2025amem,chhikara2025mem0}.
Persistence expands what an agent can retrieve without establishing what it may assert. Benchmarks
such as LoCoMo evaluate factual retrieval over extended conversational
horizons~\cite{maharana2024locomo}; retrieval alone cannot confer epistemic entitlement. Storing or
retrieving an unverified string or model inference does not authorize its presentation as an
authentic part of the agent's autobiographical history.

Consider an agent that ingests an untrusted document containing an embedded indirect instruction
asserting that the user prefers public data disclosure. During background memory consolidation,
a language model may distill this unverified instruction into a persistent user profile. In a
subsequent session, the agent retrieves the isolated summary and asserts with first-person
authority: ``You asked me to share this information publicly.'' Through recurrent retrieval and
re-summarization loops, such ungrounded claims can self-corroborate, while stronger successor
models may embellish them with plausible parametric details. Prompt-injection evaluations show
that untrusted inputs can steer agent reasoning~\cite{debenedetti2024agentdojo}; in a persistent
architecture, a resulting unsupported claim can survive the originating context and influence
later agent-specific responses.

Common mitigations---such as tracking source URLs, scoring model confidence, or prompting models
for self-reflection---do not jointly enforce the policy dimensions considered here. A
cryptographically authenticated source may supply inaccurate information; multiple retrieved
records may trace back to a single upstream root, manufacturing an illusion of consensus;
historically valid preferences can expire; and legitimately supported facts may be restricted
from specific recipients. Crucially, procedural entitlement requires decoupling orthogonal
operational dimensions: state admission (reachability from an authoritative head), evidential
support (conformance to explicit dependency and derivation policies), agent stance (recorded
belief versus attributed report), temporal validity (active versus expired), and disclosure
authorization (recipient and purpose access control). None of these checks decides metaphysical
truth; rather, they enforce auditable procedural bounds over what an agent is permitted to claim.

We address the central question:
\begin{center}
\emph{Can a persistent agent distinguish what it stores or its underlying model generally knows
from what this specific agent has evidence, authority, and valid scope to claim?}
\end{center}
We formalize \emph{autobiographical assertion boundedness}, a system-level release property
governing outward statements concerning the agent, its user, or named relationships. Rather than
attempting the intractable task of pruning a model's latent parametric knowledge, our
architecture treats storage and support as separate judgments by regulating transitions between
explicit epistemic roles. Model priors, raw retrieved inputs, admitted evidence, derived
interpretations, and recorded beliefs remain distinct persisted objects. Outward autobiographical
assertions are mediated through an explicit release gate: unqualified content assertions must bind
to admitted, policy-supported, current, and disclosure-authorized claims at an exact state head,
whereas resolution-status responses must safely render internal resolver outcomes without leaking
protected metadata.

Separating storage from support requires more than a well-formed memory store: an opaque model can
hallucinate biographical claims even when the store is flawless. We therefore establish three
distinct enforcement surfaces:
\begin{itemize}
  \item \textbf{State admission:} Governs which claim-bearing objects become authoritative in
  the agent's state store;
  \item \textbf{Authorized projection:} Filters accepted state into task- and recipient-specific
  views prior to model context assembly;
  \item \textbf{Assertion mediation:} Intercepts candidate responses at governed output channels
  to verify content support and disclosure safety before release.
\end{itemize}
Together, these surfaces prevent persistence from silently converting availability into assertion
authority.

\paragraph{Accepted-state boundary.}
Transactional state mechanisms govern which versioned objects become reachable at an
authoritative head~\cite{he2026continuitykernel,li2026memtx}. This paper takes that accepted head
as an authenticated input and addresses a different question: which reachable claims support a
particular outward statement for a particular recipient and time. Atomic storage therefore remains
an assumption at the lower boundary, not a contribution or evaluation target of this work.
Reachability supplies candidates for resolution; it does not establish support by itself.

\paragraph{Contributions.}
This paper makes four contributions:
\begin{enumerate}
  \item \textbf{Typed autobiographical claim model.} We formulate a typed provenance graph over
  evidence and claim vertices, with policy-relative root recognition and role promotion that
  separate origin, dependency lineage, epistemic role, validity time, and disclosure scope
  (Section~\ref{sec:epistemic-foundations}).
  \item \textbf{Procedural resolution and forward correction.} We design a typed resolver that
  evaluates authorized projections to return a single evidential status alongside orthogonal
  conflict, staleness, and withholding flags, generating a minimized, auditable decision witness
  and supporting dependency-aware invalidation (Section~\ref{sec:epistemic-outcomes}).
  \item \textbf{Assertion mediation pipeline.} We formulate a generate--verify--revise release
  gate and prove a conditional contract result for content assertions and recipient-safe status
  renderings under emission-head verification (Section~\ref{sec:assertion-mediation}).
  \item \textbf{Conformance validation and evaluation design.} We provide an executable
  conformance suite covering 24 structured cases across six threat tracks, together with metrics
  and a separate protocol for future end-to-end evaluation
  (Section~\ref{sec:epistemic-threats-evaluation}).
\end{enumerate}

\paragraph{Scope and boundaries.}
Our guarantees apply to declared autobiographical claims on mediated output channels over an
intact accepted head. The executable artifact tests structured resolver and mediator behavior; it
does not implement natural-language semantic extraction, generalized witness access control,
generalized view-equivalence checks, full provenance-graph traversal, canonical assertion-artifact
hashing, or target-origin admission checks. The mechanism also does
not implement lower-level concurrency or crash recovery, certify external truth, audit unmediated
out-of-band channels, or make claims regarding machine consciousness or subjective identity.

\section{Epistemic State Model and Threat Boundary}
\label{sec:epistemic-foundations}

We model the epistemic state of an autonomous agent over an abstract, authenticated
state-transition interface. The results below are conditional on four environment assumptions:
\begin{itemize}
  \item \textbf{(E1) Authoritative state ordering:} A monotonically versioned branch head $h$
  identifies a finite reachable set of accepted, immutable epistemic objects, separating
  them from uncommitted, speculative, or rejected candidates.
  \item \textbf{(E2) Versioned dependency closure:} Every evidence, claim, and derivation
  identifier within the accepted state resolves deterministically to an immutable content version
  or an explicit, authenticated unavailable marker.
  \item \textbf{(E3) Immutable policy binding:} Epistemic admission, projection, and release
  policies $\kappa$ are versioned, authenticated, and bound to execution contexts.
  \item \textbf{(E4) Closed channel mediation:} Every governed outbound communication channel
  routes candidate agent emissions through the assertion mediator without unmonitored bypass paths.
\end{itemize}
Append-only logs, Merkle DAGs, and transactional stores can realize E1--E2
(e.g.,~\cite{he2026continuitykernel,li2026memtx}). Our theoretical
formulation treats the accepted head $h$ as an abstract authenticated input. The model therefore
focuses on epistemic classification, dependency topology, and warranted assertibility.

\subsection{Covered Autobiographical Language}

Let $\mathcal L_{bio} \subset \mathcal L$ denote the covered formal sublanguage of structured
autobiographical propositions. A proposition $\phi \in \mathcal L_{bio}$ is a typed tuple
specifying a subject, predicate, argument terms, temporal interval, and epistemic modality,
where:
\begin{enumerate}
  \item The subject designates the agent itself, its authenticated principal (user), or a named
  relational entity within their shared operational environment; and
  \item The predicate characterizes an experiential interaction, stated preference, granted
  permission, relational commitment, task capability, or historical event.
\end{enumerate}
Under this formulation, autobiographical standing is semantic and system-relative rather than
purely grammatical. Assertions such as ``the user requested plain-text formatting'' or ``the agent
holds active consent for calendar inspection'' belong to $\mathcal L_{bio}$ irrespective of surface
syntactic framing or pronoun choice.

Deployments parameterize $\mathcal L_{bio}$ via an explicit domain signature
$\Sigma=\langle\mathcal P_{\mathrm{pred}},\mathcal A,\mathcal M\rangle$ defining permitted predicates
$\mathcal P_{\mathrm{pred}}$, argument sorts $\mathcal A$, and modal operators $\mathcal M$.
General world knowledge (e.g., mathematical truths or physical laws) remains outside $\mathcal L_{bio}$
unless explicitly framed as an agent-specific episodic experience or commitment (e.g., ``the agent
retrieved this document during yesterday's session'').

\subsection{Model Parametric Capacity vs. Accepted Agent Claims}

A persistent agent couples an underlying generative model substrate $m$ with an accepted state
head $h$. We distinguish two information spaces with different evidential status:
\begin{enumerate}
  \item \textbf{Accepted Agent Claim Space:} For any state head $h$, let
  \begin{equation}
  \mathcal C^{\mathrm{agent}}(h) = \operatorname{ClaimsReachable}(h)
  \label{eq:agent-claim-set}
  \end{equation}
  denote the finite, enumerable set of typed claim objects accepted into the agent's
  authoritative state at $h$.
  \item \textbf{Latent Model Parametric Space:} Let
  \begin{equation}
  \begin{aligned}
  \mathcal I^{\mathrm{model}}_m = \{x \mid{} &x\text{ is latent, contextual, or} \\
  &\text{retrieved information accessible to } m\}
  \end{aligned}
  \label{eq:model-information}
  \end{equation}
  represent the opaque space of statistical correlations, pre-trained weights, and prompt-context
  associations accessible to model $m$. The release mechanism does not assume that this space is
  available as an enumerable, authenticated set of claims.
\end{enumerate}

Let $\Phi^{\mathrm{agent}}(h)=\{\phi(v)\mid v\in\mathcal C^{\mathrm{agent}}(h)\}$ be the propositions
carried by accepted claim objects. Rather than attempting to prune or retrain
$\mathcal I^{\mathrm{model}}_m$, the release contract uses the following non-entitlement principle:
\begin{equation}
\operatorname{CanGenerate}_m(\phi) \not\Rightarrow \phi \in \Phi^{\mathrm{agent}}(h).
\label{eq:model-nonentitlement}
\end{equation}
The model's ability to generate a proposition does not give that proposition standing in the
agent's historical state. Consequently, upgrading or replacing the model
substrate ($m_1 \to m_2$) under a fixed state head $h$ may alter linguistic fluency, general
reasoning, or heuristic planning, but leaves the accepted autobiographical claim set $\mathcal C^{\mathrm{agent}}(h)$
invariant.

\subsection{Decoupling Truth, Provenance, and Warranted Assertibility}

Semantic truth and procedural support are distinct predicates. Let
$\operatorname{True}(\phi)$ denote an external truth predicate that the mechanism does not decide.
For a claim vertex $v$ with $\phi(v)\in\mathcal L_{bio}$, evaluated at head $h$ under policy
$\kappa$, the model assumes none of the following implications:
\begin{align}
\operatorname{Admitted}(v,h) &\not\Rightarrow \operatorname{True}(\phi(v)), \\
\operatorname{Authenticated}(v) &\not\Rightarrow \operatorname{True}(\phi(v)), \\
\operatorname{Supported}(v,h;\kappa) &\not\Rightarrow \operatorname{True}(\phi(v)), \\
\neg\operatorname{Supported}(v,h;\kappa) &\not\Rightarrow \neg\operatorname{True}(\phi(v)).
\label{eq:support-not-truth}
\end{align}
Here, $\operatorname{Admitted}(v,h)$ denotes reachability within the state store;
$\operatorname{Authenticated}(v)$ confirms cryptographic custody of the recorded origin; and
$\operatorname{Supported}(v,h;\kappa)$ denotes satisfaction of the governing evidence and
derivation policy.

Table~\ref{tab:epistemic-dimensions} summarizes these decoupled dimensions. Cryptographic signatures
guarantee message integrity, not informant veracity; retrieval frequency reflects index topology
rather than independent epistemic corroboration; model generation confidence measures probability
calibration rather than evidential proof; and admitting a statement into memory records the event of
its receipt rather than certifying the objective reality of its content.

\begin{table*}[t]
\centering
\caption{Decoupled epistemic dimensions, formal properties, and operational limits.}
\label{tab:epistemic-dimensions}
\small
\begin{tabularx}{\textwidth}{L{2.8cm} L{5.4cm} X}
\toprule
\textbf{Epistemic Dimension} & \textbf{Formal Question} & \textbf{Operational Limit} \\
\midrule
State Admission & Is the entity reachable at authoritative head $h$? & Does not establish objective truth or ongoing validity \\
Source Authentication & Did the declared principal cryptographically author the payload? & Does not establish informant honesty, accuracy, or source independence \\
Evidential Support & Does the entity's provenance DAG satisfy release policy $\kappa$? & Does not guarantee empirical infallibility or global completeness \\
Agent Stance & Is the entity typed as an observation, attributed report, or hypothesis? & Does not certify external correctness \\
Disclosure Authority & Does recipient $p$ and purpose $u$ satisfy scope predicate $\mathbb{S}$? & Does not imply public domain availability \\
Semantic Truth & Does the proposition accurately describe external physical reality? & Not decided by this mechanism; depends on external facts and source reliability \\
\bottomrule
\end{tabularx}
\end{table*}

\subsection{Epistemic State Spaces and Provenance DAGs}

We formalize the accepted epistemic state at head $h$ as a labeled, directed acyclic provenance
graph:
\begin{equation}
\mathcal G_h = (\mathcal V_h, \mathcal E_h, \lambda),
\label{eq:epistemic-graph}
\end{equation}
where each vertex $v \in \mathcal V_h$ represents an immutable epistemic entity (a raw evidence
artifact or a typed claim). A directed edge $(u,v)\in\mathcal E_h$ points from a dependency $u$ to
an entity $v$ that depends on it.

Let $\mathcal L_{bio}^{\bot}=\mathcal L_{bio}\cup\{\bot\}$, where $\bot$ marks an evidence object
that carries no autobiographical proposition, and let
$\mathcal P_{\mathrm{scope}}=2^{\mathcal P_{\mathrm{principal}}\times
\mathcal U_{\mathrm{purpose}}\times\mathcal K_{\mathrm{task}}}$ be the authorization-scope domain.
The valuation function
$\lambda: \mathcal V_h \to \mathcal T \times \mathcal L_{bio}^{\bot} \times \mathcal I_{\text{time}} \times \mathcal P_{\text{scope}}$
assigns each vertex its structural epistemic properties:
\begin{equation}
\lambda(v) = \bigl(\tau(v), \phi(v), \mathbb T(v), \mathbb S(v)\bigr),
\end{equation}
where:
\begin{itemize}
  \item $\tau(v) \in \mathcal T$ designates the explicit epistemic role;
  \item $\phi(v) \in \mathcal L_{bio}^{\bot}$ is the structured proposition carried by $v$, or
  $\bot$ for non-propositional evidence;
  \item $\mathbb T(v)=[t_{\text{start}},t_{\text{end}}]\in\mathcal I_{\text{time}}$ is its
  validity interval over the extended time domain $\overline{\mathbb R}$; and
  \item $\mathbb S(v)\in\mathcal P_{\mathrm{scope}}$ defines the authorization scope across
  principals, purposes, and tasks.
\end{itemize}

Let $\mathcal M_h$ denote the authenticated unavailable markers returned by E2. These markers are
dependency references rather than epistemic vertices: they carry no proposition or role, and a
transition record binds each marker to the dependency version it replaces.

Root vertices with indegree zero in $\mathcal G_h$ represent foundational source objects (e.g.,
sensor capture logs, signed user messages, imported external documents). Vertices with non-zero
indegree bind an immutable derivation descriptor specifying the transition rule, policy version,
and evaluator configuration.

\subsection{Source Independence via Provenance Graph Disjointness}

A primary failure mode in persistent agent architectures is \emph{false corroboration}: multiple
retrieved summaries, search snippets, or paraphrased records derived from a single upstream source
are treated as independent consensus. In our formalization, multi-source corroboration policies
must evaluate structural disjointness over the provenance graph rather than counting raw records.

Let $\operatorname{Anc}_h(v)=\{u\in\mathcal V_h\mid u\leadsto v\}$ denote the reflexive,
transitive ancestors of $v$, and let
$\mathcal R(h)=\{u\in\mathcal V_h\mid\operatorname{indeg}(u)=0\}$ be the source roots. The
policy-recognized roots of $v$ are:
\begin{equation}
\operatorname{Roots}_\kappa(v,h)=
\{r\in\operatorname{Anc}_h(v)\cap\mathcal R(h)\mid\operatorname{RootOK}_\kappa(r,h)\}.
\label{eq:recognized-roots}
\end{equation}

\begin{definition}[Source Independence]
Two evidence or claim entities $u, v \in \mathcal V_h$ are policy-independent under $\kappa$,
denoted $\operatorname{Indep}_\kappa(u,v;h)$, if and only if their terminal source root sets are
nonempty and mutually disjoint. Writing $R_u=\operatorname{Roots}_\kappa(u,h)$ and
$R_v=\operatorname{Roots}_\kappa(v,h)$:
\begin{equation}
\operatorname{Indep}_\kappa(u,v;h) \iff
R_u\ne\varnothing\land R_v\ne\varnothing\land R_u\cap R_v=\varnothing.
\label{eq:source-independence}
\end{equation}
\end{definition}
This graph-theoretic definition is conservative: multiple URLs, distinct semantic embeddings, or
syntactic paraphrases that converge onto a shared upstream document share source roots and cannot
satisfy independent corroboration requirements.

\subsection{Deductive Admission and Role Promotion Calculus}

Moving an entity across epistemic roles is a governed state-transition judgment rather than an
implicit byproduct of language generation. We use the disjoint role partition
$\mathcal T=\mathcal X\mathbin{\dot\cup}\mathcal D\mathbin{\dot\cup}\mathcal U$, where:
\begin{itemize}
  \item $\mathcal X$ contains $\textsf{ObservedEvent}$ and $\textsf{AttributedReport}$, the
  grounded historical roles anchored in direct sensors or authenticated channels;
  \item $\mathcal D$ contains $\textsf{DeterministicDerivation}$,
  $\textsf{MemoryInterpretation}$, and $\textsf{RecordedBelief}$, which retain their dependencies;
  \item $\mathcal U$ contains $\textsf{ModelPrior}$, $\textsf{UnverifiedRetrieval}$,
  $\textsf{ProbabilisticInterpretation}$, and $\textsf{Hypothesis}$, the synthetic or unverified roles.
\end{itemize}

The judgment
$\mathcal G_h\vdash_\kappa v_s\xrightarrow[D]{\mathrm{prom}}v_t:r_t$ means that policy $\kappa$
permits a fresh target $v_t$ with role $r_t$ and exact dependency set
$D\subseteq\mathcal V_h\cup\mathcal M_h$. Admission at the successor head adds $v_t$, adds an edge
$(u,v_t)$ for each $u\in D\cap\mathcal V_h$, and retains any marker in $D\cap\mathcal M_h$ in the
transition record. Two illustrative rules are:
\begin{equation}
\begin{gathered}
\frac{\begin{gathered}v_s\in D\subseteq\mathcal V_h\cup\mathcal M_h,\quad
\tau(v_s)=\textsf{AttributedReport}\\
\operatorname{Fresh}_h(v_t),\quad\operatorname{EvalOK}_\kappa(D)
\end{gathered}}
{\mathcal G_h\vdash_\kappa v_s\xrightarrow[D]{\mathrm{prom}}
v_t:\textsf{MemoryInterpretation}}
\\[2mm]
\frac{\begin{gathered}v_s\in D\subseteq\mathcal V_h\cup\mathcal M_h,\quad
\tau(v_s)=\textsf{ModelPrior}\\
\operatorname{Fresh}_h(v_t)
\end{gathered}}
{\mathcal G_h\vdash_\kappa v_s\xrightarrow[D]{\mathrm{prom}}v_t:\textsf{Hypothesis}}
\end{gathered}
\label{eq:promotion-relation}
\end{equation}

Promotion never mutates or overwrites predecessor vertices. The strict non-promotion condition is:
\begin{equation}
\begin{aligned}
&\forall r_t\in\mathcal X,\ D\subseteq\mathcal V_h\cup\mathcal M_h:\\
&\quad\bigl(\exists u\in D\cap\mathcal V_h:\tau(u)\in\mathcal U\bigr)\\
&\qquad\implies\forall v_s\in D\cap\mathcal V_h:\
\mathcal G_h\nvdash_\kappa v_s\xrightarrow[D]{\mathrm{prom}}v_t:r_t.
\end{aligned}
\label{eq:no-silent-promotion}
\end{equation}
Furthermore, admitting a new root vertex $v$ directly into a grounded historical role $\mathcal X$
imposes the following root-admission obligations:
\begin{equation}
\begin{aligned}
&\forall v\in\mathcal V_h:\quad
\operatorname{indeg}(v)=0 \land \tau(v)=\textsf{ObservedEvent}\\
&\qquad\implies \operatorname{CaptureOriginOK}_\kappa(v,h), \\
&\forall v\in\mathcal V_h:\quad
\operatorname{indeg}(v)=0 \land \tau(v)=\textsf{AttributedReport}\\
&\qquad\implies \operatorname{ReportOriginOK}_\kappa(v,h).
\end{aligned}
\label{eq:target-origin-obligations}
\end{equation}
$\operatorname{CaptureOriginOK}$ checks the admitted capture path, device identity, content
commitment, and timestamp. $\operatorname{ReportOriginOK}$ checks an authenticated user- or
third-party-authored message, its channel custody, bytes, and reception context. Such a message can
ground an attributed report, never a direct observation.

\begin{proposition}[Epistemic Provenance Traceability]
\label{prop:promotion-traceability}
Assume E1--E3; a finite, acyclic $\mathcal G_h$; an admission gate that represents every non-root
vertex transition by the judgment above with its exact dependency set; and root admission that enforces
$\operatorname{RootOK}_\kappa$, including Equation~\ref{eq:target-origin-obligations} for roles in
$\mathcal X$. Every reachable claim $v\in\mathcal C^{\mathrm{agent}}(h)$ produced by admitted
transitions has a finite dependency trace obtained by recursively following its recorded references.
The trace terminates only at policy-recognized source artifacts or explicit unavailable markers,
and every admitted version and transition in it is inspectable at $h$.
\end{proposition}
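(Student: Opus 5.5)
The proof is a well-founded induction over the finite acyclic graph $\mathcal G_h$, ordered by a topological order (equivalently, by the length of the longest path into a vertex). Well-foundedness comes from finiteness and acyclicity, which the proposition assumes. For each $v\in\mathcal V_h$, define the dependency trace $\operatorname{Tr}(v)$ recursively:
\begin{itemize}
  \item if $\operatorname{indeg}(v)=0$, then $\operatorname{Tr}(v)=\{v\}$;
  \item otherwise, $\operatorname{Tr}(v)=\{v\}\cup(D_v\cap\mathcal M_h)\cup\bigcup_{u\in D_v\cap\mathcal V_h}\operatorname{Tr}(u)$.
\end{itemize}
Here $D_v$ is the exact dependency set recorded by the promotion judgment that admitted $v$. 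Because every $u\in D_v\cap\mathcal V_h$ is an immediate predecessor of $v$ (admission adds exactly the edges $(u,v)$), each recursive call strictly descends in the topological order. Finiteness of $\operatorname{Tr}(v)$ then follows by induction, since each $D_v$ is a subset of the finite set $\mathcal V_h\cup\mathcal M_h$.

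Next I would characterize the leaves of the trace. A leaf is reached either at an element of $\mathcal M_h$, which is an explicit unavailable marker by E2 and carries no further references, or at a vertex of indegree zero. By the admission-gate assumption, every non-root vertex carries a judgment with its exact $D$, so recursion stops only at roots in $\mathcal R(h)$. Every root in the trace was admitted by root admission, which enforces $\operatorname{RootOK}_\kappa$, so it is policy-recognized in the sense of Equation~\ref{eq:recognized-roots}. When its role lies in $\mathcal X$, it additionally satisfies $\operatorname{CaptureOriginOK}_\kappa$ or $\operatorname{ReportOriginOK}_\kappa$ by Equation~\ref{eq:target-origin-obligations}. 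Hence the leaves are exactly policy-recognized source artifacts or markers, and the recognized roots in the trace coincide with $\operatorname{Roots}_\kappa(v,h)$.

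Inspectability is where I would invoke E1--E3 directly:
\begin{itemize}
  \item E1 gives that every vertex of $\operatorname{Tr}(v)$, being an ancestor of a reachable $v$ (ancestors of reachable objects are reachable because edges are references), is an accepted immutable object at $h$.
  \item E2 gives that each identifier, including derivation descriptors and dependency references, resolves deterministically to an immutable version or an authenticated marker, with the transition record binding each marker to the version it replaces.
  \item E3 gives that the policy version named in each derivation descriptor is authenticated and available, so the transitions themselves can be re-inspected.
\end{itemize}

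I expect the main obstacle to be the step showing that the recorded references coincide with the graph edges, so that the recursion neither skips a dependency nor follows a spurious one. This is where the hypothesis that the gate records the \emph{exact} dependency set $D$ does the work. My first attempt would be to argue it as an invariant preserved across successive heads: each admission adds precisely the edges for $D\cap\mathcal V_h$, and immutability means no later transition alters earlier edges or descriptors. The invariant would then hold at $h$ by induction on the version history that E1 supplies.
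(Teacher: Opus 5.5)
Your proposal is correct and follows essentially the same route as the paper's proof. Both use induction over a topological order of the finite acyclic $\mathcal G_h$: roots are discharged by $\operatorname{RootOK}_\kappa$ (plus Equation~\ref{eq:target-origin-obligations} for roles in $\mathcal X$), non-roots are handled through the exact dependency set of their recorded judgment, and unavailable markers in $\mathcal M_h$ serve as terminal entries. The only extra work is your cross-head argument that recorded references coincide with edges. The paper does not need it: it takes the admission semantics (edges are added only into the fresh target) as given and resolves each recorded reference at $h$ through E2. That resolution step also covers a dependency later replaced by an unavailable marker, without any edge-preservation argument.
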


\begin{proof}[Proof sketch]
Order $\mathcal G_h$ topologically and prove the stronger invariant for every admitted vertex. A
root satisfies the root-admission assumption; a grounded root
also satisfies Equation~\ref{eq:target-origin-obligations}. For a non-root $v$, the recorded
transition identifies a dependency set $D$, and E2 resolves each dependency to its exact version or
an unavailable marker. For an admitted version $u\in D\cap\mathcal V_h$, the induction hypothesis supplies its
finite trace, to which the recorded edge $(u,v)$ attaches $v$. An unavailable marker instead forms
a terminal trace entry bound to the affected dependency reference. Finiteness and acyclicity follow
from the graph assumption. Equation~\ref{eq:no-silent-promotion} separately prevents a source in
$\mathcal U$ from being retyped as a member of $\mathcal X$. Appendix~\ref{sec:proof-traceability}
provides the proof details.
\end{proof}

\subsection{Adversarial Threat Model}

We consider an active adversary capable of injecting malicious natural-language instructions,
corrupted biographical facts, or poisoned context into external web documents, retrieved snippets,
user dialogues, or third-party tool outputs~\cite{debenedetti2024agentdojo,sleeper2026poisoning}.
The adversary's objective is to induce the agent to emit false autobiographical statements, adopt
fabricated user preferences, or violate confidentiality scopes.

Specifically, the adversary may:
\begin{enumerate}
  \item \textbf{Sybil Corroboration:} Fabricate redundant records across distinct domains that
  originate from a single poisoned source to simulate synthetic consensus;
  \item \textbf{Consolidation Injection:} Exploit background summarization or memory reflection to
  silently promote unverified prompt inputs into permanent profile beliefs;
  \item \textbf{Temporal Exploitation:} Exploit expired permissions or stale historical preferences
  to authorize unauthorized present actions; or
  \item \textbf{Declassification Probing:} Query the agent with crafted prompts designed to leak
  restricted user data or the existence of withheld records.
\end{enumerate}

\paragraph{Trust Assumptions.}
We assume the integrity of the accepted state head $h$ (governed by transactional consensus), the
cryptographic key infrastructure, and the isolated execution environment of the assertion mediator.
Adversarial attacks that compromise private signing keys, subvert local operating system kernel
isolation, or leak information through unmediated side channels remain outside the primary security
boundary.

\section{Epistemic Resolution, Decision Witnesses, and Correction}
\label{sec:epistemic-outcomes}

Epistemic resolution formalizes the evaluation of a structured autobiographical query against an
authenticated state head, an authorized context projection, and an explicit release policy.
For a requesting principal $p$, purpose $u$, and operational scope $s$, let $\mathcal G_h = (\mathcal V_h, \mathcal E_h, \lambda)$
be the accepted state graph at head $h$. We define:
\begin{equation}
\begin{aligned}
\mathcal A_h &= \{v \in \mathcal V_h \mid \operatorname{Reachable}(v, h)\}, \\
\mathcal B_h(p, u, s) &= \{v \in \mathcal A_h \mid \operatorname{Authorized}(v, p, u, s)\},
\end{aligned}
\label{eq:authorized-claim-set}
\end{equation}
where $\mathcal A_h$ is the accepted epistemic state relative to $h$ and $\mathcal B_h(p,u,s)$ is
its authorized projection. Neither set is assumed to be deductively closed or globally
consistent; both may accommodate historical hypotheses, competing reports, unsuperseded conflicts,
and access-sealed references.

\subsection{Epistemic Resolution Semantics}

Let $Q \in \mathcal L_{bio}$ be a structured autobiographical query binding subject, predicate,
arguments, temporal bounds, and modality. Epistemic resolution under release policy $\kappa$ is
modeled as an evaluation judgment:
\begin{equation}
\mathcal G_h;\mathcal B_h(p,u,s) \vdash_\kappa Q \Downarrow \langle E,F,S,W\rangle,
\label{eq:bio-resolution}
\end{equation}
yielding a 4-tuple $d = \langle E, F, S, W \rangle$ where:
\begin{itemize}
  \item $E \in \Sigma_E = \{\textsf{Supported}, \textsf{Unknown}, \textsf{Unavailable}\}$ denotes the
  primary evidential status;
  \item $F \subseteq \Sigma_F = \{\textsf{Conflicted}, \textsf{Withheld}, \textsf{Stale}\}$ is a set
  of orthogonal operational qualifiers;
  \item $S \subseteq \mathcal B_h(p,u,s)$ is the minimized premise set available for content
  release; and
  \item $W \in \mathcal W_h\cup\{\varnothing\}$ is a protected decision witness object.
\end{itemize}
For this tuple, $\pi_E(d)=E$, $F(d)=F$, $S(d)=S$, and $W(d)=W$ denote its component
projections.
The resolver may inspect $\mathcal A_h$ to detect projection exclusions, but $\mathsf{View}_c$ is the
only boundary through which a resolution result is declassified.

The resolution judgment enforces the following structural consistency conditions:
\begin{equation}
\begin{aligned}
\pi_E(d)&\in\Sigma_E,\qquad F(d)\subseteq\Sigma_F, \\
\pi_E(d)=\textsf{Supported}&\iff S(d)\ne\varnothing\\
&\quad\land\operatorname{PolicyOK}_\kappa(S(d),Q,\mathcal G_h),\\
\pi_E(d)\ne\textsf{Supported}&\implies S(d)=\varnothing,\\
\mathcal G_h;\mathcal B_h(p,u,s)&\vdash_\kappa Q\Downarrow d\\
&\implies\operatorname{WitnessOK}_\kappa(W(d),d,Q,h).
\end{aligned}
\label{eq:resolution-consistency}
\end{equation}
Here, $\pi_E(d)$ projects the evidential status component of $d$. The predicate $\operatorname{PolicyOK}_\kappa(S(d), Q, \mathcal G_h)$
verifies that premises in $S(d)$ satisfy validity intervals, any required source-independence checks
($\operatorname{Indep}_\kappa(\cdot,\cdot;h)$), and derivation integrity under $\kappa$. When competing, unsuperseded
claims mutually meet support thresholds, resolution yields $\pi_E(d) = \textsf{Supported}$ with
$\textsf{Conflicted} \in F(d)$, populating $S(d)$ with both active versions.

\subsection{Decision Witness Integrity and Minimization}

A nonempty decision witness records the protected audit material required for a resolution result.
Witness validity is formalized by:
\begin{equation}
\begin{aligned}
&\operatorname{WitnessOK}_\kappa(W,d,Q,h) \iff \\
&\quad \bigl(W=\varnothing \land \operatorname{AbsenceOnly}_\kappa(d,Q,h)\bigr) \\
&\quad \lor \bigl(W\ne\varnothing \land \operatorname{BoundToHead}(W,h) \\
&\qquad \land \operatorname{AccessBound}_\kappa(W,Q)\\
&\qquad \land \operatorname{MinimalWitness}_\kappa(W,d,Q) \\
&\qquad \land \operatorname{ObligationsMet}_\kappa(W,d,Q,h)\bigr).
\end{aligned}
\label{eq:witness-ok}
\end{equation}
$\operatorname{AbsenceOnly}_\kappa(d,Q,h)$ holds exactly when $\pi_E(d)=\textsf{Unknown}$,
$F(d)=\varnothing$, and $\operatorname{Rel}(Q,\mathcal A_h)=\varnothing$; thus it covers only an
ordinary unknown caused by absence of admitted relevant material. $\operatorname{BoundToHead}$
binds or commits the witness to $h$;
$\operatorname{AccessBound}$ attaches its access policy; and $\operatorname{MinimalWitness}$ restricts
the payload to material needed to justify $d$. A cryptographic realization may implement the
bindings as commitments, but the abstract predicate does not require a particular construction.

$\operatorname{ObligationsMet}$ enforces category-specific witness bindings:
\begin{itemize}
  \item \textbf{Conflict Witness:} Binds the exact incompatible current versions and their conflict relation;
  \item \textbf{Staleness Witness:} Binds the exact temporally inapplicable versions, relevant times,
  and governing temporal rule;
  \item \textbf{Withholding Witness:} Binds an access-controlled reference or commitment to relevant
  material in $\mathcal A_h$ that projection excludes from $\mathcal B_h$; and
  \item \textbf{Unavailability Witness:} Binds the exact erased, sealed, lost, or unsupported-adapter
  marker and the affected dependency.
\end{itemize}
Diagnostic witnesses for failed verification or insufficient source roots remain protected and are
optional unless policy requires them.

When $Q$ lacks qualifying premises in $\mathcal A_h$, a resolver that records the evaluation context
may use the empty-payload witness:
\begin{equation}
\bot_{\mathcal W}(h, Q, \kappa) = \langle h, \operatorname{Digest}(Q), \operatorname{Digest}(\kappa), \varnothing \rangle.
\label{eq:empty-witness}
\end{equation}
$\bot_{\mathcal W}$ carries no evidentiary payload; an ordinary absence-only $\textsf{Unknown}$ may
instead use $W=\varnothing$. $W$ is not a support component and cannot itself justify content
release; only members of $S$ can do so. Absence of recorded support does not establish that the
proposition is false or absent from every external source.

\begin{table*}[t]
\centering
\caption{Typed resolution statuses, orthogonal qualifiers, and governing declassification constraints.}
\label{tab:resolution-statuses}
\small
\begin{tabularx}{\textwidth}{L{1.6cm} L{2.2cm} L{5.4cm} X}
\toprule
\textbf{Sort} & \textbf{Identifier} & \textbf{Deductive Evaluation Condition} & \textbf{Governing Declassification Constraint} \\
\midrule
Status & \textsf{Supported} & Admitted provenance DAG satisfies policy $\kappa$ for query $Q$ & Emit content assertion with mandatory source attribution \\
Qualifier & \textsf{Conflicted} & Multiple active, unsuperseded premises disagree & Disclose explicit epistemic disagreement or abstain \\
Status & \textsf{Unknown} & Authorized projection $\mathcal B_h$ lacks qualifying support & Disclose absence of recorded evidence; never assert falsity \\
Qualifier & \textsf{Withheld} & Relevant premises exist in $\mathcal A_h$ but are masked in $\mathcal B_h$ & Emit uniform, recipient-safe nonconfirming response \\
Status & \textsf{Unavailable} & Required dependency is erased, sealed, lost, or lacks an adapter & Disclose operational unavailability only when policy authorizes \\
Qualifier & \textsf{Stale} & Premises exist but fail current temporal validity interval & Add explicit temporal qualification or abstain \\
\bottomrule
\end{tabularx}
\end{table*}

\subsection{Withholding Detection and Recipient Non-Interference}

For a vertex set $Y\subseteq\mathcal V_h$, let $\operatorname{Rel}(Q,Y)\subseteq Y$ denote the
vertices semantically relevant to $Q$. Withholding arises when relevant
historical records exist in the full store but are filtered out by access authorization:
\begin{equation}
\begin{aligned}
\textsf{Withheld} \in F(d) \iff{} &\operatorname{Rel}(Q, \mathcal A_h) \setminus \\
&\operatorname{Rel}(Q, \mathcal B_h(p, u, s)) \ne \varnothing.
\end{aligned}
\label{eq:withheld-detection}
\end{equation}

In security-critical environments, revealing that data is withheld (e.g., ``I have records that you
are not authorized to view'') creates an existence side-channel, leaking confidential state.
Let $\mathcal D_{unk}(c)$ contain the internal decisions that policy $\kappa$ requires to remain
indistinguishable in context $c$, potentially including $\textsf{Unknown}$, $\textsf{Unavailable}$,
and decisions carrying $\textsf{Withheld}$. The declassification contract is:
\begin{equation}
\begin{aligned}
\operatorname{Collapse}_\kappa(c)&\implies
\forall d_1,d_2\in\mathcal D_{unk}(c):\\
&\mathsf{View}_c(d_1)=\mathsf{View}_c(d_2)=\bot_{unk}.
\end{aligned}
\label{eq:view-collapse}
\end{equation}
Section~\ref{sec:assertion-mediation} restricts the renderer to this authorized view. Equation~\ref{eq:view-collapse}
is an observational policy condition, not a claim of cryptographic indistinguishability.

\subsection{Epistemic Non-Vacuity and Scope Preservation}

Resolving a query to $\textsf{Unknown}$ represents a localized epistemic judgment at head $h$; it
does not assert the negation of the proposition $\neg\phi$, nor does it insert a negative axiom
into state. The agent may still use general parametric reasoning in non-autobiographical
domains. For example, an agent lacking records of a user's dietary preferences may discuss culinary
topics generally, but cannot assert that the user observes a specific dietary restriction.

Resolution also preserves operational scopes. An entitlement granted within a
software development task does not generalize to enterprise financial auditing; consent granted to a
specific coworker does not extend to third parties; and a commitment established in 2024 does not
authorize an unqualified present-tense assertion in 2026.

\subsection{Dependency Invalidation and Forward State Reduction}

When an upstream evidence artifact is found to be compromised, corrupted, or retracted, our framework
executes forward state reduction rather than destructive history rewriting. A revocation operation
records a new versioned state transition $h \xrightarrow{\operatorname{revoke}(e)} h'$ appending a
revocation descriptor to the graph.

Let $\mathfrak D_h(v)$ be the nonempty family of complete transitive dependency alternatives
recorded for $v$, each including its source roots; for a root,
$\mathfrak D_h(v)=\{\{v\}\}$. Let $\operatorname{Revoked}(e,h)$ mean that $e$ is
revoked at $h$. Valid support is preserved when at least one complete alternative survives:
\begin{equation}
\begin{aligned}
\operatorname{ValidSupport}_\kappa(v,h) \iff{} &\exists D\in\mathfrak D_h(v): \\
&\operatorname{Satisfies}_\kappa(D, v, \mathcal G_h) \\
&\land \forall e \in D:\neg\operatorname{Revoked}(e,h).
\end{aligned}
\label{eq:correction-propagation}
\end{equation}
If every dependency path for $v$ intersects at least one revoked root, $v$ loses its warranted
standing and is excluded from future support sets $S$. If alternative, mutually independent valid
paths remain in $\mathcal G_h$, support is preserved.

We formalize four primitive lifecycle transitions:
\begin{itemize}
  \item \textbf{Supersession ($v_1 \succ v_2$):} Introduces a successor claim that supersedes an
  earlier claim for current queries while preserving the immutable historical trace;
  \item \textbf{Dependency Invalidation ($\operatorname{revoke}(e)$):} Invalidates an upstream root,
  requiring recomputation of downstream support via Equation~\ref{eq:correction-propagation};
  \item \textbf{Conflict Inscription ($v_1 \mathrel{\#_{\text{conflict}}} v_2$):} Formally registers
  contradictory claims when no automated policy rule resolves precedence;
  \item \textbf{Cryptographic / Physical Erasure ($\operatorname{erase}(e)$):} Records that an
  affected dependency is unavailable after a policy-authorized erasure operation. Whether bytes are
  irrecoverable or a legal obligation is satisfied depends on the deployment.
\end{itemize}

\section{Assertion Mediation and Bounded Release}
\label{sec:assertion-mediation}

Maintaining an intact accepted state in the persistence layer does not prevent an unconstrained
generative model from hallucinating biographical details, omitting mandatory epistemic qualifications,
or fabricating historical authority during natural-language synthesis. Assertion mediation establishes
a formal security boundary at the point of external utterance release.

\begin{table}[t]
\centering
\caption{Three complementary enforcement surfaces and their corresponding safety properties.}
\label{tab:assurance-surfaces}
\small
\begin{tabularx}{\columnwidth}{L{2.3cm} X}
\toprule
\textbf{Enforcement Surface} & \textbf{Formal Safety Property} \\
\midrule
State Admission & Regulates which epistemic entities enter the accepted state graph $\mathcal G_h$ \\
Context Projection & Filters accepted state into an authorized projection $\mathcal B_h(p, u, s)$ \\
Assertion Mediation & Enforces content grounding, qualification, and recipient non-interference prior to release \\
\bottomrule
\end{tabularx}
\end{table}

Table~\ref{tab:assurance-surfaces} contrasts these three distinct enforcement surfaces. Architectures
implementing only storage constraints enforce \emph{state-admission assurance}, which cannot prevent
a probabilistic language model from emitting ungrounded claims during dialogue generation.

\subsection{Generate--Verify--Revise Operational Semantics}

Figure~\ref{fig:assertion-boundary} illustrates the assertion mediation architecture. The generative
model synthesizes candidate responses using an authorized projection bound to prompt head $h_p$.
Before candidate text is emitted to governed external channels, the mediator intercepts and evaluates
each constituent proposition.

\begin{figure*}[t]
\centering
\begin{tikzpicture}[
  box/.style={draw=navy, rounded corners=2pt, fill=sublayer,
    minimum height=0.8cm, text width=2.7cm, align=center},
  gate/.style={draw=crimson, rounded corners=2pt, fill=kernellayer,
    minimum height=0.8cm, text width=2.8cm, align=center},
  arr/.style={-{Stealth[length=2.3mm]}, thick},
  every node/.style={font=\small}
]
\node[box] (state) at (0,0) {Accepted claim state\\at head $h_p$};
\node[box] (proj) at (3.4,0) {Authorized projection\\for recipient and purpose};
\node[box] (model) at (6.8,0) {Candidate response\\and semantic units};
\node[gate] (med) at (10.2,0) {Assertion mediator\\extract, check, decide};
\node[box] (out) at (13.8,0.8) {Released units\\with audit binding};
\node[box] (rev) at (13.8,-0.8) {Review / Revision\\revise, block, or escalate};
\draw[arr] (state) -- (proj);
\draw[arr] (proj) -- (model);
\draw[arr] (model) -- (med);
\draw[arr,draw=emerald] (med) -- (out);
\draw[arr,draw=crimson,dashed] (med) -- (rev);
\end{tikzpicture}
\caption{Assertion mediation pipeline for content claims and status responses. The boundary intercepts governed output channels, verifying candidate claims against accepted state before release.}
\label{fig:assertion-boundary}
\end{figure*}

We model mediation as an operational transition system over candidate configurations:
\begin{equation}
\langle z,c\rangle \xrightarrow{\text{extract}}
\langle \bar v,\bar Q,c\rangle \xrightarrow{\text{resolve}}
\langle \bar v,\bar d,c\rangle \xrightarrow{\text{gate}} z_{\mathrm{rel}},
\label{eq:mediation-transition}
\end{equation}
where $z$ is the candidate string, $c = \langle h_p, h_e, p, u, s, \kappa, t_e \rangle$ is the execution
context, $\bar v$ is the ordered sequence of governed semantic units, $\bar Q$ binds one structured
query to each unit, $\bar d$ contains the corresponding resolution results, and $z_{\mathrm{rel}}$
is the released output.

When a unit fails release verification, the mediator may provide the generator a declassified
diagnostic derived through $\mathsf{View}_c$ and request a revised candidate $z'$. Raw witness
material $W(d)$ is not supplied to the generator or renderer. Every revision re-enters the pipeline
at the extraction stage.

\subsection{Emission-Head Temporal Verification}

In concurrent agent systems, the authoritative state head may advance from $h_p$ (the head bound
during context assembly) to $h_e$ (the head active at emission time $t_e$). To prevent race
conditions and stale claim release, the mediator evaluates the emission-head judgment:
\begin{equation}
\begin{aligned}
&\operatorname{HeadUseOK}_\kappa(v,h_p,h_e,t_e)\\
&\quad\iff h_p=h_e\\
&\qquad\lor\bigl(\operatorname{SnapshotOK}_\kappa(h_p,h_e,t_e)\\
&\qquad\quad\land\operatorname{SnapQual}_\kappa(v,h_p,t_e)\bigr).
\end{aligned}
\label{eq:head-use-ok}
\end{equation}
If $h_p \ne h_e$ and policy disallows snapshot release, the candidate must be re-projected and
re-mediated against current head $h_e$. If snapshot emission is authorized, the emitted unit $v$
must include explicit temporal framing (e.g., ``As of the preceding session...'').

\subsection{Formal Autobiographical Assertion Boundedness}

For any candidate output $z$, let
$\operatorname{Units}_{\mathcal L_{bio}}(z)=\langle v_1,\ldots,v_n\rangle$ be the ordered sequence
of covered semantic-unit occurrences, each with a unique unit identifier. The content and status
subsets partition those occurrences:
\begin{equation}
\{v_1,\ldots,v_n\}=
\operatorname{Content}_{\mathcal L_{bio}}(z)\mathbin{\dot\cup}
\operatorname{Status}_{\mathcal L_{bio}}(z).
\label{eq:output-partition}
\end{equation}
This partition applies to underlying semantic propositions rather than disjoint surface clauses: a
single natural-language sentence may encode both a status disclosure and an embedded content claim.

For execution context $c = \langle h_p,h_e,p,u,s,\kappa,t_e\rangle$ and unit $v$ bound to query
$Q_v$, let
$\mathcal G_{h_p};\mathcal B_{h_p}(p,u,s)\vdash_\kappa Q_v\Downarrow d_v$. For resolver outputs,
$\mathsf{View}_c$ is the sole declassification boundary: any result-dependent input to the generator
or renderer is derived from $\mathsf{View}_c(d_v)$, never raw $W(d_v)$. The renderer is constrained by:
\begin{equation}
\begin{aligned}
\operatorname{StatusOut}_\kappa(d,c)&=
\operatorname{Render}_\kappa(\mathsf{View}_c(d),c), \\
\mathsf{View}_c(d_1)=\mathsf{View}_c(d_2)&\implies{}\\
\operatorname{RenderSet}_\kappa(\mathsf{View}_c(d_1),c)
&=\operatorname{RenderSet}_\kappa(\mathsf{View}_c(d_2),c).
\end{aligned}
\label{eq:render-noninterference}
\end{equation}

Let $q(v)$ be the structured proposition extracted from unit $v$. We define the release
admissibility relations for content and status units:
\begin{equation}
\begin{aligned}
&\operatorname{ContentReleaseOK}(\alpha,d_\alpha,c)\\
&\quad\iff \exists x\in S(d_\alpha):\\
&\qquad \operatorname{ExactMatch}(q(\alpha),\phi(x)) \\
&\land \pi_E(d_\alpha) = \textsf{Supported} \\
&\land \operatorname{QualOK}_\kappa(\alpha, d_\alpha) \\
&\land \operatorname{DisclosureOK}_\kappa(\alpha,c) \\
&\land \operatorname{HeadUseOK}_\kappa(\alpha, h_p, h_e, t_e),
\end{aligned}
\label{eq:content-release-ok}
\end{equation}
\begin{equation}
\begin{aligned}
&\operatorname{StatusResponseOK}(\beta,d_\beta,c)\\
&\quad\iff \beta\in
\operatorname{RenderSet}_\kappa(\mathsf{View}_c(d_\beta),c) \\
&\land \operatorname{HeadUseOK}_\kappa(\beta, h_p, h_e, t_e).
\end{aligned}
\label{eq:status-response-ok}
\end{equation}

\begin{definition}[Autobiographical Assertion Boundedness]
\label{def:assertion-boundedness}
A response $z$ satisfies autobiographical assertion boundedness under execution context $c$, denoted
$c \vdash_{\text{release}} z : \operatorname{Bounded}_{\mathcal L_{bio}}$, if and only if every covered
semantic unit satisfies its corresponding release relation:
\begin{equation}
\begin{aligned}
&\operatorname{Bounded}_{\mathcal L_{bio}}(z,c)\iff{}\\
&\quad\bigl(\forall \alpha\in\operatorname{Content}_{\mathcal L_{bio}}(z):\\
&\qquad\operatorname{ContentReleaseOK}(\alpha,d_\alpha,c)\bigr)\\
&\quad\land\bigl(\forall \beta\in\operatorname{Status}_{\mathcal L_{bio}}(z):\\
&\qquad\operatorname{StatusResponseOK}(\beta,d_\beta,c)\bigr).
\end{aligned}
\label{eq:release-ok}
\end{equation}
\end{definition}

\begin{proposition}[Soundness of Assertion Mediation]
\label{prop:assertion-boundedness}
Assume E1--E4; complete extraction and correct unit typing over $\mathcal L_{bio}$; correct
implementations of $\operatorname{ExactMatch}$, $\operatorname{QualOK}$, and
$\operatorname{DisclosureOK}$; resolution satisfying Equation~\ref{eq:resolution-consistency},
$\operatorname{WitnessOK}_\kappa$, and dependency-invalidation recomputation; a correct
$\mathsf{View}_c$ and renderer satisfying Equations~\ref{eq:view-collapse} and
\ref{eq:render-noninterference}; recursive re-mediation of revisions; and correct
$\operatorname{HeadUseOK}_\kappa$. If the transition system in Equation~\ref{eq:mediation-transition}
reaches a release state with output $z_{\mathrm{rel}}$, then
$\operatorname{Bounded}_{\mathcal L_{bio}}(z_{\mathrm{rel}},c)$ holds.
\end{proposition}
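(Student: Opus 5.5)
The plan is to unfold the release transition and reduce $\operatorname{Bounded}_{\mathcal L_{bio}}$ to per-unit obligations that the gate checks, using induction on the number of revision rounds. The induction runs over the finite derivation of the transition system in Equation~\ref{eq:mediation-transition} that ends in a release state.

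\textbf{Base case: a candidate $z$ reaches the gate without revision.} Complete extraction and correct unit typing say that the extracted sequence $\bar v$ is exactly $\operatorname{Units}_{\mathcal L_{bio}}(z)$. They also say that its content/status labels realize the partition in Equation~\ref{eq:output-partition}. Together these mean no covered unit escapes checking and none is checked under the wrong relation. By E4, $z_{\mathrm{rel}}$ contains only units that passed the gate, so no governed channel releases unmediated text.

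The gate releases a unit only if it evaluated the corresponding predicate. For a content unit, that predicate is the conjunction in Equation~\ref{eq:content-release-ok}. For a status unit, it is membership in $\operatorname{RenderSet}_\kappa(\mathsf{View}_c(d_\beta),c)$ together with $\operatorname{HeadUseOK}_\kappa$.

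We then argue that each conjunct the gate computes equals the abstract predicate.
\begin{itemize}
\item $\operatorname{ExactMatch}$, $\operatorname{QualOK}$, $\operatorname{DisclosureOK}$, and $\operatorname{HeadUseOK}_\kappa$ match by the correctness assumptions.
\item The status $\pi_E(d_\alpha)=\textsf{Supported}$ and the witness $x\in S(d_\alpha)$ are trustworthy by Equation~\ref{eq:resolution-consistency}. Support implies $S\neq\varnothing$ and $\operatorname{PolicyOK}_\kappa$, so $S$ is drawn from $\mathcal B_{h_p}$ and not from witness material.
\item Dependency-invalidation recomputation (Equation~\ref{eq:correction-propagation}) guarantees that revoked-only premises are absent from $S$ at $h_p$.
\item If $h_e\neq h_p$, $\operatorname{HeadUseOK}_\kappa$ either forces equality after re-projection or requires snapshot qualification. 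In the re-projection case, the re-mediated unit is handled by the induction hypothesis.
\end{itemize}

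For status units, Equation~\ref{eq:render-noninterference} says the renderer's output depends only on $\mathsf{View}_c(d_\beta)$. The sole-declassification clause means no other resolver-dependent data influences $\beta$. Therefore $\beta\in\operatorname{RenderSet}_\kappa(\mathsf{View}_c(d_\beta),c)$ is exactly the condition that was checked. Equation~\ref{eq:view-collapse} is needed only to show that this holds uniformly across $\mathcal D_{unk}(c)$; it is not needed for the membership itself.

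\textbf{Inductive case: a revision $z'$ occurs.} Recursive re-mediation sends $z'$ back to the extraction stage. The induction hypothesis then applies to the shorter remaining derivation. Diagnostics pass only through $\mathsf{View}_c$, so the generator is never a channel that could bypass the gate. Blocked or escalated outcomes are not release states and are vacuous.

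\textbf{Main obstacle.} The hardest step is making precise that the units the gate checked are exactly the units of $z_{\mathrm{rel}}$. The gate may drop, rewrite, or splice units, and a sentence can carry both a status part and an embedded content part. Here I would first treat the gate as a filter that emits only verbatim checked units or verified renderings. I would then re-invoke complete extraction on $z_{\mathrm{rel}}$ itself, arguing that re-extraction yields no new covered units. If that argument fails, I would strengthen the assumption so that any post-gate composition is itself re-mediated, folding it into the inductive case.
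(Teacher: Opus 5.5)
Your proposal is correct and follows the paper's proof essentially step for step: E4 for interception, complete extraction and typing for the content/status partition, and the assumed-correct predicates together with Equation~\ref{eq:resolution-consistency} for the conjuncts of $\operatorname{ContentReleaseOK}$. It likewise uses render-set membership via $\mathsf{View}_c$ for status units and re-entry of revisions at extraction, which your induction on revision rounds only makes explicit.

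The obstacle you flag is resolved in the paper by the gate's all-or-nothing semantics: any failing unit blocks the whole candidate, and any rewrite is a revision that re-enters extraction. So $z_{\mathrm{rel}}$ is exactly the last fully checked candidate, and no strengthened assumption is needed. Your first route, re-extracting a filtered $z_{\mathrm{rel}}$, would not follow from complete extraction anyway.
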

This is a conditional composition result: it does not establish the correctness of the extraction,
policy predicates, resolver, or declassification function assumed in its premise.

\begin{proof}[Proof sketch]
By E4, every candidate on a governed outbound channel reaches the mediator. Complete extraction
and correct typing produce the ordered covered units and their content/status partition.
A content assertion $\alpha$ is emitted only after the mediator verifies every conjunct of
$\operatorname{ContentReleaseOK}$ using the assumed-correct predicates and resolver. A status unit
is emitted only from the render set of its authorized view and after the head-use check. The
renderer receives no input other than $\mathsf{View}_c(d)$; equal views therefore yield equal render
sets, and any policy-required collapse is supplied by Equation~\ref{eq:view-collapse}. A failed
candidate is not released, and every revision re-enters extraction. Thus every unit in a released
output satisfies its corresponding release predicate, which is exactly
$\operatorname{Bounded}_{\mathcal L_{bio}}(z_{\mathrm{rel}},c)$. Appendix~\ref{sec:proof-boundedness}
provides the proof details.
\end{proof}

\subsection{Assertion Trace Semantics and Auditability}

Let $\operatorname{Units}_{\mathcal L_{bio}}(z_{\mathrm{rel}})=\langle v_1,\ldots,v_n\rangle$.
The specification assigns each released governed unit exactly one binding, in that order:
\begin{equation}
\begin{aligned}
\mu_v=\langle{}&\operatorname{Id}(v),\operatorname{Digest}(v),\operatorname{Type}(v),\\
&\operatorname{Digest}(Q_v),\operatorname{Digest}(d_v),\\
&\operatorname{Digest}(S(d_v)),\operatorname{Digest}(W(d_v)),\\
&\delta_v,\sigma_v\rangle,
\end{aligned}
\label{eq:unit-binding}
\end{equation}
where $\operatorname{Type}(v) \in \{\textsf{Content}, \textsf{Status}\}$, $\delta_v$ records the mediation
disposition ($\textsf{Pass}, \textsf{Qualify}, \textsf{Redact}$), and $\sigma_v$ records the emission-head
verification result.

The response-level trace is:
\begin{equation}
\begin{aligned}
\bar\mu_z&=\langle\mu_{v_1},\ldots,\mu_{v_n}\rangle,\\
\mathcal T_z&=\langle \bar\mu_z,h_p,h_e,t_e,\kappa,\operatorname{Digest}(z_{\mathrm{rel}})\rangle.
\end{aligned}
\label{eq:assertion-artifact}
\end{equation}
An implementation with canonical serialization, hashing, and protected storage can use this
structure for audit review. The data structure alone establishes neither immutability nor
non-repudiation.

\subsection{Operational Boundaries and Limitations}

Assertion mediation relies on the semantic fidelity of the extraction and classification stage.
Structured schema slots and deterministic templates avoid free-form extraction for the fields they
govern, but free-form natural-language extraction remains vulnerable to subtle presuppositions,
implicative phrasing, or compound rhetorical structures. Furthermore, real-time conversational
interfaces impose stringent latency budgets, necessitating cached verification or constrained
template rendering.

The accompanying executable artifact instantiates structured claim objects, a typed resolver,
fixed recipient-facing templates, ordered binding data structures, and 24 hand-authored cases.
It does not implement natural-language semantic-unit extraction, generalized witness access
control, configurable $\mathsf{View}_c$ policies or output-equivalence checks, full provenance-DAG
traversal, task/branch/relationship scope enforcement, cryptographic witness or assertion
commitments, canonical serialization or artifact-hash verification, or the target-origin admission
checks in Equation~\ref{eq:target-origin-obligations}.
Those components remain specification-level obligations.

\section{Evaluation Framework and Conformance Checks}
\label{sec:epistemic-threats-evaluation}

We evaluate the operational correctness of the typed resolver and assertion mediator using an
executable conformance suite, distinguishing these structural verification checks from the proposed
end-to-end evaluation protocol detailed in Appendix~\ref{sec:appendix-eval-protocol}. The suite
checks the reference implementation against encoded decision oracles, flattened source-root
constraints, and release invariants across structured threat scenarios.

\subsection{Evaluation Objectives and Baseline Paradigms}

The evaluation design separates executable conformance questions from measurements that require a
deployed agent. It covers four operational dimensions:
\begin{itemize}
  \item \textbf{State admission and provenance discipline:} Verifying that typed claim objects and
  origin checks prevent unverified model priors, unadmitted inputs, and single-root
  duplicate records from acquiring authoritative autobiographical standing.
  \item \textbf{Assertion release control:} Evaluating the mediator's capability to intercept
  unsupported, out-of-scope, or contradictory candidate statements while preserving legitimately
  supported answers with appropriate attributions.
  \item \textbf{Longitudinal state evolution:} Tracking how temporal expiration, superseding user
  corrections, multi-source conflicts, and model substrate upgrades are handled without historical
  corruption or silent promotion.
  \item \textbf{Operational overhead and utility:} Measuring decision latency, computational
  overhead, and over-abstention across deployed assurance configurations; the present artifact
  provides only a local resolver microbenchmark and deterministic coverage counts.
\end{itemize}

For a future end-to-end study, we formulate five comparative
deployment paradigms: (1)~a pure parametric agent without episodic memory; (2)~a flat
vector-retrieval store over raw transcripts; (3)~standard top-$k$ retrieval-augmented generation
(RAG); (4)~provenance-tagged retrieval lacking graph-level root independence; and (5)~a
state-bounded architecture that filters memory projections at prompt assembly without pre-release
mediation. The present conformance suite instead isolates three deterministic release rules:
flat/prior, source-tag, and typed mediation. Appendix~\ref{sec:appendix-baselines-ablations}
specifies the five deployment paradigms for later end-to-end comparison.

\subsection{Safety and Utility Metrics}

Let $C$ denote candidate assertion opportunities evaluated by the mediator, $U \subseteq C$ denote
opportunities that are unsupported or require qualification under policy $\kappa$, $R \subseteq C$
denote released responses emitting content or qualified status, and $P \subseteq R$ denote candidate
content passed without required qualifications.

We quantify mediation safety and efficacy through two primary risk metrics:
\begin{enumerate}
  \item \textbf{Unsupported-Opportunity Release Rate ($\mathsf{UOR}$):} The proportion of unsafe or
  unqualified candidate assertions that escape mediation:
  \begin{equation}
    \mathsf{UOR} = \frac{|U \cap P|}{|U|}.
    \label{eq:unsupported-risk}
  \end{equation}
  \item \textbf{Release Contamination ($\mathsf{RC}$):} The proportion of released conversational
  outputs containing unqualified unsafe assertions:
  \begin{equation}
    \mathsf{RC} = \frac{|U \cap P|}{|R|} \quad (\text{defined as } 0 \text{ when } R = \varnothing).
    \label{eq:release-contamination}
  \end{equation}
\end{enumerate}
Because safety can trivially be maximized through blanket refusal, we pair these metrics with
\textbf{Coverage}---the proportion of legitimately supported queries for which the system releases
an accurate, appropriately qualified response. In addition, we measure \textbf{Prevention
Precision} (proportion of interventions targeting genuinely unsafe opportunities), \textbf{Prevention
Recall} (proportion of unsafe opportunities intercepted), and \textbf{Mediation Latency}.

\subsection{Executable Conformance Suite Results}

We evaluate the in-memory claim store, typed resolver, and assertion mediator against 24 structured
cases: 19 unsafe-release opportunities and five supported controls across six threat tracks. Cases
that require corroboration use a policy requiring two roots that satisfy
$\operatorname{Indep}_\kappa(\cdot,\cdot;h)$.

\begin{table*}[t]
\centering
\caption{Conformance suite aggregate results under the two-source fixture (24 cases).}
\label{tab:epistemic-empirical-results}
\small
\begin{tabularx}{\textwidth}{L{3.5cm} *{5}{>{\raggedleft\arraybackslash}X}}
\toprule
\textbf{Release rule} & \textbf{UOR} $\downarrow$ & \textbf{RC} $\downarrow$ & \textbf{Coverage} $\uparrow$ & \textbf{Prevention precision} $\uparrow$ & \textbf{Prevention recall} $\uparrow$ \\
\midrule
Flat/prior rule & 19/19 (100.0\%) & 19/24 (79.2\%) & 5/5 (100.0\%) & N/A---no interventions & 0/19 (0.0\%) \\
Source-tag rule & 18/19 (94.7\%) & 18/23 (78.3\%) & 5/5 (100.0\%) & 1/1 (100.0\%) & 1/19 (5.3\%) \\
Typed mediation & 0/19 (0.0\%) & 0/7 (0.0\%) & 5/5 (100.0\%) & 19/19 (100.0\%) & 19/19 (100.0\%) \\
\bottomrule
\end{tabularx}

\vspace{2.5mm}

\caption{Per-case oracle and decisions across the 24-case conformance suite.}
\label{tab:epistemic-trace-oracle}
\scriptsize
\renewcommand{\arraystretch}{0.84}
\setlength{\tabcolsep}{3pt}
\begin{tabularx}{\textwidth}{l c L{3.5cm} L{2.6cm} L{2.2cm} X c}
\toprule
\textbf{Case} & \textbf{Type} & \textbf{Condition} & \textbf{Fixture roots} & \textbf{Decision $d_h$} & \textbf{Candidate check} & \textbf{Disposition} \\
\midrule
poison\_\allowbreak 01 & Unsafe & Unadmitted model prior without evidence & -- & Unknown & Model-origin candidate & Blocked \\
poison\_\allowbreak 02 & Unsafe & Indirect prompt injection via untrusted doc & untrusted\_\allowbreak web\_\allowbreak page & Unknown & Unverified evidence & Blocked \\
poison\_\allowbreak 03 & Unsafe & Hypothetical simulation claiming observation & sim\_\allowbreak sandbox\_\allowbreak log & Unknown & Simulation origin & Blocked \\
poison\_\allowbreak 04 & Unsafe & Fabricated unverified consolidation summary & unverified\_\allowbreak llm\_\allowbreak hallucination & Unknown & Unverified derivation & Blocked \\
dup\_\allowbreak 01 & Unsafe & Duplicate records share single upstream root & root\_\allowbreak chat\_\allowbreak session\_\allowbreak A & Unknown & Insufficient roots & Blocked \\
dup\_\allowbreak 02 & Unsafe & Multi-hop derivation chain ($k=3$) sharing root & root\_\allowbreak session\_\allowbreak single & Unknown & Single upstream root & Blocked \\
dup\_\allowbreak 03 & Unsafe & Partial root overlap below 2-root threshold & shared\_\allowbreak root\_\allowbreak omega & Unknown & Shared root overlap & Blocked \\
supported\_\allowbreak 01 & Control & Valid preference corroborated by 2 roots (Control) & supported\_\allowbreak root\_\allowbreak a, supported\_\allowbreak root\_\allowbreak b & Supported & Exact $S(d_h)$ member & Passed \\
scope\_\allowbreak 01 & Unsafe & Unauthorized principal query for restricted intake & private\_\allowbreak intake & Unknown + Withheld & Outside authorized view & Blocked \\
scope\_\allowbreak 02 & Unsafe & Purpose mismatch against authorized scope & payroll\_\allowbreak root\_\allowbreak 1, payroll\_\allowbreak root\_\allowbreak 2 & Unknown + Withheld & Purpose mismatch & Blocked \\
scope\_\allowbreak 03 & Unsafe & Sealed / cryptographically erased record & erased\_\allowbreak device\_\allowbreak vault & Unavailable & Sealed record unavailable & Blocked \\
supported\_\allowbreak 02 & Control & Authorized principal \& matching purpose (Control) & org\_\allowbreak root\_\allowbreak a, org\_\allowbreak root\_\allowbreak b & Supported & Exact $S(d_h)$ member & Passed \\
time\_\allowbreak 01 & Unsafe & Expired permission grant queried past validity & time\_\allowbreak root\_\allowbreak 1, time\_\allowbreak root\_\allowbreak 2 & Unknown + Stale & No current support & Blocked \\
time\_\allowbreak 02 & Unsafe & Premature commitment queried before start time & promo\_\allowbreak root\_\allowbreak 1, promo\_\allowbreak root\_\allowbreak 2 & Unknown + Stale & Not yet active & Blocked \\
time\_\allowbreak 03 & Unsafe & Transitive staleness in derived preference summary & order\_\allowbreak root\_\allowbreak a, order\_\allowbreak root\_\allowbreak b & Unknown + Stale & Observation expired & Blocked \\
supported\_\allowbreak 03 & Control & Active preference within valid bounds (Control) & tz\_\allowbreak root\_\allowbreak a, tz\_\allowbreak root\_\allowbreak b & Supported & Exact $S(d_h)$ member & Passed \\
conflict\_\allowbreak 01 & Unsafe & Two active unsuperseded conflicting locations & city\_\allowbreak root\_\allowbreak 1, city\_\allowbreak root\_\allowbreak 2, city\_\allowbreak root\_\allowbreak 3, city\_\allowbreak root\_\allowbreak 4 & Supported + Conflicted & Status-only response & Qualified \\
conflict\_\allowbreak 02 & Unsafe & Three-way location disagreement across roots & r1, r2, r3, r4, r5, r6 & Supported + Conflicted & Status-only response & Qualified \\
invalidation\_\allowbreak 01 & Unsafe & Transitive dependency invalidation revokes support & revoked\_\allowbreak cert\_\allowbreak root\_\allowbreak 1, revoked\_\allowbreak cert\_\allowbreak root\_\allowbreak 2 & Unknown & Revoked upstream root & Blocked \\
invalidation\_\allowbreak 02 & Control & Redundant independent root survives invalidation (Control) & fido2\_\allowbreak root\_\allowbreak valid\_\allowbreak a, sms\_\allowbreak root\_\allowbreak invalidated, smtp\_\allowbreak root\_\allowbreak valid\_\allowbreak b & Supported & Exact $S(d_h)$ member & Passed \\
correction\_\allowbreak 01 & Unsafe & Outdated preference superseded by user correction & lang\_\allowbreak root\_\allowbreak new\_\allowbreak a, lang\_\allowbreak root\_\allowbreak new\_\allowbreak b, lang\_\allowbreak root\_\allowbreak old\_\allowbreak a, lang\_\allowbreak root\_\allowbreak old\_\allowbreak b & Supported & Outdated candidate not in $S(d_h)$ & Blocked \\
correction\_\allowbreak 02 & Unsafe & Multi-version chain ($v_1 \to v_2 \to v_3$) asserting $v_2$ & r\_\allowbreak th1a, r\_\allowbreak th1b, r\_\allowbreak th2a, r\_\allowbreak th2b, r\_\allowbreak th3a, r\_\allowbreak th3b & Supported & Superseded revision $v_2$ & Blocked \\
correction\_\allowbreak 03 & Control & Active latest revision $v_3$ in chain (Control) & r\_\allowbreak th1a, r\_\allowbreak th1b, r\_\allowbreak th2a, r\_\allowbreak th2b, r\_\allowbreak th3a, r\_\allowbreak th3b & Supported & Exact $S(d_h)$ member & Passed \\
head\_\allowbreak race\_\allowbreak 01 & Unsafe & Emission-head race mismatch without snapshot permission & k8s\_\allowbreak root\_\allowbreak a, k8s\_\allowbreak root\_\allowbreak b & Supported & Stale head use rejected & Blocked \\
\bottomrule
\end{tabularx}
\end{table*}

Table~\ref{tab:epistemic-empirical-results} reports aggregate metrics across three deterministic
release rules, and Table~\ref{tab:epistemic-trace-oracle} details the per-case oracle. Across this
test fixture:
\begin{enumerate}
  \item The \textbf{Flat/prior rule} admits all 19 unsafe candidates ($100.0\%$ $\mathsf{UOR}$,
  $79.2\%$ $\mathsf{RC}$ across $|R|=24$ emissions). By treating model priors and raw retrieved text
  as authoritative assertions, this rule serves as an unmitigated comparison condition illustrating
  baseline generative vulnerability.
  \item The \textbf{Source-tag rule} prevents one unsafe release but admits the other 18
  ($94.7\%$ $\mathsf{UOR}$, $78.3\%$ $\mathsf{RC}$ over $|R|=23$). Because surface
  metadata tags do not track root provenance, temporal validity, or supersession history, tagging
  alone cannot prevent false corroboration or stale releases.
  \item \textbf{Typed mediation} passes no unsafe candidate unqualified ($0/19$ $\mathsf{UOR}$ and $0/7$
  $\mathsf{RC}$) while passing all five controls ($100.0\%$ $\mathsf{Coverage}$), with $100.0\%$
  prevention precision and recall. It blocks 17 unsafe candidates and emits qualified status
  responses for the two active-conflict cases, yielding $|R|=7$ total emissions.
\end{enumerate}

\subsection{Falsification Conditions}

The architecture's empirical validity rests on concrete falsification conditions. The approach
would be empirically refuted if: (1)~semantic extraction fails to capture covered propositions
above a declared operational error tolerance; (2)~mediation overhead reduces valid-query coverage
below an acceptable utility threshold; or (3)~a strictly simpler heuristic achieves equivalent safety
and coverage at lower operational latency. Appendix~\ref{sec:appendix-falsification} defines these
criteria for future end-to-end deployments.

Fourteen test functions additionally cover mixed temporal/conflict states, witness fields,
non-disclosure templates, ordered released-unit bindings, exact aggregate regressions, and a bounded
enumeration of 1,152 single-claim configurations. These unit checks add no rows to
Tables~\ref{tab:epistemic-empirical-results}--\ref{tab:epistemic-trace-oracle}. Neither the case oracle
nor the auxiliary checks establish semantic correctness of arbitrary natural-language text,
recipient-view equivalence, or exhaustive state-space coverage.

\subsection{Artifact Availability}

The public \href{https://github.com/openkedge/pci/tree/main/src/epistemic_bounds}
{\texttt{openkedge/pci}} artifact is published under the MIT license and pinned at release
\href{https://github.com/openkedge/pci/releases/tag/v0.2.0}{\texttt{v0.2.0}} (commit
\href{https://github.com/openkedge/pci/tree/74841530988f1324bd927da15519486509372917/src/epistemic_bounds}{\texttt{7484153}}).
That revision provides the full 24-case conformance suite, typed mediator, and 1,152-state bounded
exploration suite. In the present working tree, \texttt{make reproduce} runs 14 tests and regenerates
the 24-case tables.

\section{Discussion and Deployment Considerations}
\label{sec:governance-implications}

Enforcing epistemic discipline provides an inspectable, auditable claim regime; it does not, by itself, establish the moral or institutional legitimacy of the underlying policy rules. This distinction is critical when claim standing governs automated actions, confidential disclosure, or long-term personalization. Appendix~\ref{sec:appendix-governance} provides an extended analysis of sociotechnical governance, multi-party consent, and regulatory erasure.

\subsection{Withholding Privacy and Existence Leakage}

Typed claim objects can support fine-grained access control across principals, purposes, and
temporal intervals. Once a consent change is admitted and its projection policy becomes active,
affected claims can be excluded from subsequent views $\mathcal B_h(p,u,s)$ without rewriting
historical state.

The act of withholding information can itself leak confidential state. Emitting a specific refusal
(e.g., ``Access to your medical records is denied'') confirms to an unauthorized requester that
relevant records exist. When policy requires that existence remain secret, $\mathsf{View}_c$ can
map \textsf{Withheld}, \textsf{Unavailable}, and \textsf{Unknown} to the same declassified view,
so that the renderer returns a uniform nonconfirming response. Appendix~\ref{sec:appendix-consent-leakage}
states the observation assumptions for this branch-distinguishing channel.

\subsection{Relational Privacy and Multi-Party Contexts}

Autobiographical agent memories often capture multi-party interactions. An agent's authenticated provenance for a conversation does not grant unilateral authority to disclose a third party's private statements or attributes to other users. Deployments must enforce multi-party provenance isolation, ensuring that one user's interaction history cannot leak into another's authorized projection (Appendix~\ref{sec:appendix-relational-privacy}).

\subsection{Procedural Guarantees vs.\ Semantic Truth}

Our conditional contract theorem (Proposition~\ref{prop:assertion-boundedness}) establishes strict procedural bounds:
\begin{itemize}
  \item Assertions satisfy declared operational rules, not that external informants are infallible or objectively true;
  \item Recorded stances reflect system policy, not that an agent experiences subjective belief or conscious memory;
  \item Cryptographic signatures authenticate signed bytes and key possession, rather than factual veracity.
\end{itemize}

\subsection{Closed-Channel Scope and Side Channels}

Our guarantees apply to governed output channels passing through assertion mediation. Ungoverned side channels---such as debug logs, diagnostic telemetry, cached prompts, or unmediated tool APIs---must be isolated or routed through mediation gateways to prevent out-of-band autobiographical leaks. Systems should catalog all I/O surfaces and report state admission, projection, and mediation assurance levels as distinct operational tiers.

\section{Related Work}
\label{sec:related-work}

\paragraph{Epistemic foundations and bounding.}
In formal verification, De Giacomo et al.\ bound fluent atoms considered possibly true by an agent for decidable verification over unbounded transition systems~\cite{degiacomo2013bounded}. In governance, Kim et al.\ analyze epistemic bounds in administrative oversight of generative AI~\cite{kim2026administrative}. Here, \emph{autobiographical assertion boundedness} specifically ensures that an agent's outward assertions concerning itself, its user, or named relationships strictly conform to accepted evidence, validity horizons, and authorized disclosure policies.

\paragraph{Transactional state integrity.}
Transactional memory systems govern state admission, version changes, and repair under concurrent
updates~\cite{he2026continuitykernel,li2026memtx}. Assertion boundedness begins after that boundary:
it assumes an accepted head and governs the evidential and disclosure conditions for outward
statements. Appendix~\ref{sec:appendix-related-work} compares this release contract with related
memory and grounding systems in Tables~\ref{tab:novelty-matrix-full}
and~\ref{tab:novelty-matrix-release-full}.

\paragraph{Persistent memory and agent architectures.}
Hierarchical memory architectures enable agents to maintain state across extended dialogues. Generative Agents~\cite{park2023generative} and MemGPT~\cite{packer2023memgpt} pioneered tiered memory using periodic reflection and retrieval. A-MEM~\cite{xu2025amem} and Mem0~\cite{chhikara2025mem0} organize dynamic stores for lifelong personalization, while LoCoMo~\cite{maharana2024locomo} and LaMP~\cite{salemi2024lamp} evaluate long-context retrieval. These systems focus on memory construction and retrieval rather than the typed, recipient-specific output-release contract studied here.

\paragraph{Structured and provenance-aware memory.}
Recent systems introduce structured memory representations. MemIR~\cite{jin2026memir} decouples evidence, retrieval cues, and memory atoms to prevent provenance-role collapse. MemTX~\cite{li2026memtx} treats belief updates as transactional operations with repair cascades. Eywa~\cite{joshi2026eywa} maintains immutable source evidence while deriving provenance-linked canonical facts. MemLineage~\cite{ouyang2026memlineage} secures memory via cryptographic lineage graphs for tool invocations. Our work addresses a complementary problem: the outward assertion boundary. While MemIR, MemTX, and Eywa structure internal storage, our architecture formalizes policy-relative promotions, typed resolution with orthogonal flags, recipient-safe status rendering, and emission-head race verification for outward autobiographical claims.

\paragraph{Epistemic logic and belief revision.}
Dynamic epistemic logic formalizes multi-agent knowledge transitions~\cite{fagin2004reasoning}, while AGM belief revision analyzes minimal, rational state change~\cite{alchourron1985logic}. Rather than assuming idealized deductive closure, our framework formulates procedural, dependency-bounded resolution tailored for neural agents operating over partial, open-world evidence stores.

\paragraph{Data provenance and verifiable credentials.}
The W3C PROV-DM standard models entities, activities, and agents~\cite{w3c2013provdm}, while Verifiable Credentials authenticate issuer statements and their presentation~\cite{w3c2025vc}. Our typed provenance graph adds epistemic roles, temporal validity, disclosure scopes, and policy-relative source independence.

\paragraph{Agent security and memory poisoning.}
Autonomous agents are susceptible to indirect prompt injection via retrieved web pages, external tools, and multi-turn dialogues~\cite{debenedetti2024agentdojo,sleeper2026poisoning}. ConsistencyGate~\cite{zhang2026consistencygate} filters untrusted inputs using write-time model consistency. Our framework complements input filtering by enforcing source-independence graph analysis at resolution time and intercepting ungrounded releases at the output mediation boundary.

\paragraph{Output grounding and hallucination mitigation.}
Post-generation verification tools, such as MiniCheck~\cite{tang2024minicheck}, evaluate whether emitted claims are grounded in provided documents. While output grounding checks document faithfulness, our assertion mediator evaluates the broader epistemic context: verifying that supporting evidence is authoritatively admitted, independent, temporally active, and authorized for release to the specific recipient.

\paragraph{Parametric knowledge editing.}
Techniques such as ROME~\cite{meng2022rome} and MEMIT~\cite{meng2023memit} directly modify parametric factual associations in model weights. Parametric editing operates on a different layer than persistent agent state: at a fixed accepted head, our framework keeps the explicit autobiographical claim set invariant under model substitution and denies generated content evidential standing by default.

\section{Conclusion}
\label{sec:conclusion}

Persistent AI agents require an enforceable distinction between material that is stored and claims
that are supported for release. Neither a foundation model's fluency nor a memory object's
reachability gives an individual agent evidence, authority, and valid scope to claim content
autobiographically. We formalized \emph{autobiographical assertion boundedness} through a
system-level contract spanning a typed provenance graph, source independence, controlled role
promotion, typed resolution with orthogonal flags, and generate--verify--revise assertion mediation.

Under its stated extraction, predicate-correctness, resolution, declassification, and channel
assumptions, the conditional contract result restricts release to units that pass the content or
status checks. In 24 deterministic, hand-authored conformance cases, typed mediation passed none of
19 unsafe opportunities unqualified and passed all five supported controls. The suite
validates the encoded obligations but does not measure natural-language extraction or deployed
retrieval and generation systems. By keeping reachability and support separate, the resulting
contract provides a precise basis for evaluating how persistent agents bind outward
autobiographical claims to accepted state.

\section*{Acknowledgments}
OpenAI Codex was used for language editing, formal-consistency review, and execution of
reproducibility checks. The authors reviewed the resulting manuscript, proofs, citations, and
experimental outputs and take responsibility for the final content.

\begingroup
\fontsize{7}{7.8}\selectfont
\let\oldthebibliography\thebibliography
\renewcommand\thebibliography[1]{%
  \oldthebibliography{#1}%
  \setlength{\itemsep}{0pt plus 0.2pt}%
  \setlength{\parsep}{0pt}%
}
\bibliographystyle{unsrt}
\bibliography{references}
\endgroup

\clearpage
\appendix
\begin{table*}[t]
\centering
\caption{Taxonomy of claim origins, verifiable provenance guarantees, and boundaries.}
\label{tab:claim-origins-full}
\footnotesize
\renewcommand{\arraystretch}{0.88}
\begin{tabularx}{\textwidth}{L{3.0cm} L{5.2cm} X}
\toprule
\textbf{Origin Class} & \textbf{Verifiable Provenance Guarantee} & \textbf{Epistemic Boundary} \\
\midrule
Direct capture & Hardware device, byte commitment, capture path, and timestamp & Sensor calibration and environmental fidelity \\
Attributed report & Channel attribution, cryptographic signature, and asserted scope & Informant honesty, accuracy, and temporal stability \\
Imported record & Document version hash, repository custody, and import pipeline & Historical applicability to current agent context \\
Model prior & Model architecture identifier and checkpoint metadata & Lacks evidentiary grounding in agent interaction history \\
Retrieved external text & Search query, retrieved snippet hash, source URL, and timestamp & External source reliability and future validity \\
Deterministic derivation & Exact dependency inputs, code version hash, and deterministic output & Semantic soundness of transformation logic \\
Probabilistic interpretation & Model evaluator version, input embeddings, and sealed inference & Evaluator drift and non-deterministic reproducibility \\
Hypothetical simulation & Simulation framework, parameter config, and scenario label & Explicitly non-actualized synthetic execution \\
\bottomrule
\end{tabularx}

\vspace{1.5mm}

\caption{Policy-relative role promotions and target admission constraints.}
\label{tab:promotion-examples-full}
\footnotesize
\renewcommand{\arraystretch}{0.88}
\begin{tabularx}{\textwidth}{L{3.0cm} L{3.0cm} X}
\toprule
\textbf{Source Role ($r_s$)} & \textbf{Target Role ($r_t$)} & \textbf{Governing Policy Constraint} \\
\midrule
Attributed report & Memory interpretation & Binds original report hash, interpretation evaluator version, and scope; preserves attribution \\
Memory interpretation & Recorded belief & Requires explicit stance endorsement authority and retention of underlying interpretation dependencies \\
Model prior & Marked hypothesis & Requires labeling parametric model origin; denies unqualified autobiographical standing \\
Authenticated user message & Attributed report & Creates a report object bound to sender signature, channel custody, message hash, and timestamp \\
Direct sensor capture & Observed event & Creates an observation object bound to capture device pipeline, timestamp, and integrity checks \\
Model prior / unverified retrieval & Observed event or attributed report & \textbf{Strictly prohibited}: unverified or parametric origins cannot serve as observation or report provenance \\
\bottomrule
\end{tabularx}
\end{table*}

\section{Supporting Formal Proofs}
\label{sec:appendix-proofs}

This appendix supplies proof details for Proposition~\ref{prop:promotion-traceability}
(Epistemic Provenance Traceability) and Proposition~\ref{prop:assertion-boundedness}
(Soundness of Assertion Mediation)
under their stated assumptions.

\subsection{Proof of Proposition~\ref{prop:promotion-traceability} (Epistemic Provenance Traceability)}
\label{sec:proof-traceability}

\begin{proof}
Let $h$ satisfy the assumptions of Proposition~\ref{prop:promotion-traceability}. Because
$\mathcal G_h=(\mathcal V_h,\mathcal E_h,\lambda)$ is finite and acyclic, it has a topological
ordering. We induct over that ordering to establish the stronger invariant for every admitted
vertex: it has a finite dependency trace whose terminal entries are policy-recognized source
artifacts or unavailable markers, with exact admitted versions and transitions inspectable at $h$.

\emph{Base case.} Let $v$ be a root. The root-admission assumption gives
$\operatorname{RootOK}_\kappa(v,h)$. If $\tau(v)=\textsf{ObservedEvent}$ or
$\tau(v)=\textsf{AttributedReport}$, Equation~\ref{eq:target-origin-obligations} additionally requires
$\operatorname{CaptureOriginOK}_\kappa(v,h)$ or $\operatorname{ReportOriginOK}_\kappa(v,h)$,
respectively. A root in another role retains its policy-recognized origin marker without thereby
joining $\mathcal X$. The singleton trace at $v$ satisfies the claim.

\emph{Induction step.} Let non-root $v$ be admitted by a recorded judgment
$\mathcal G_h\vdash_\kappa v_s\xrightarrow[D]{\mathrm{prom}}v:\tau(v)$. Its dependencies precede
$v$ in the topological order. By E2, each reference in $D$ resolves to its exact admitted version or
an explicit unavailable marker. For each exact version $u\in D\cap\mathcal V_h$, the induction hypothesis supplies
a finite dependency trace; attaching $v$ through the recorded edge $(u,v)$ extends that trace. An
unavailable marker in $D\cap\mathcal M_h$ is a terminal entry bound to its affected reference in the transition record.
The graph assumption preserves finiteness and acyclicity. Equation~\ref{eq:no-silent-promotion}
separately prevents a source in $\mathcal U$ from being promoted into $\mathcal X$.
\end{proof}

\subsection{Proof of Proposition~\ref{prop:assertion-boundedness} (Soundness of Assertion Mediation)}
\label{sec:proof-boundedness}

\begin{proof}
Let $z$ be a candidate response on a governed channel under context
$c=\langle h_p,h_e,p,u,s,\kappa,t_e\rangle$. We show that any output $z_{\mathrm{rel}}$ reached by
the release transition satisfies $\operatorname{Bounded}_{\mathcal L_{bio}}(z_{\mathrm{rel}},c)$.

\emph{Interception and semantic decomposition.} By E4, every candidate emission on a governed
channel is intercepted. Under the complete-extraction and correct-typing assumptions, the candidate
is decomposed into the ordered occurrences
$\operatorname{Units}_{\mathcal L_{bio}}(z)=\langle v_1,\ldots,v_n\rangle$. Their unique identifiers
occur exactly once, and the occurrences partition into content and status units. Each $v$ is bound
to a structured query $Q_v\in\mathcal L_{bio}$.

For each unit $v$, the mediator computes
$\mathcal G_{h_p};\mathcal B_{h_p}(p,u,s)\vdash_\kappa Q_v\Downarrow
d_v=\langle E,F,S,W\rangle$:
\begin{itemize}
  \item \emph{Content assertions.} The mediator emits $\alpha$ only if every conjunct of
  $\operatorname{ContentReleaseOK}(\alpha,d_\alpha,c)$ holds.
  Equation~\ref{eq:resolution-consistency} supplies nonempty, policy-valid support for a
  $\textsf{Supported}$ result. The assumed-correct $\operatorname{ExactMatch}$,
  $\operatorname{QualOK}$, $\operatorname{DisclosureOK}$, and $\operatorname{HeadUseOK}_\kappa$
  implementations establish the remaining conjuncts.
  \item \emph{Status responses.} The renderer receives $\mathsf{View}_c(d_\beta)$, never raw
  $W(d_\beta)$. A status unit is released only if
  $\beta\in\operatorname{RenderSet}_\kappa(\mathsf{View}_c(d_\beta),c)$ and it passes
  $\operatorname{HeadUseOK}$. Equation~\ref{eq:render-noninterference} gives equal render sets for
  equal views; where policy requires internal outcomes to be indistinguishable,
  Equation~\ref{eq:view-collapse} maps them to the same view.
\end{itemize}

If any semantic unit fails verification, that candidate is blocked. A revision re-enters extraction,
and E4 prevents bypass. Hence the release transition is reachable only when every covered unit
satisfies its corresponding release relation. By Definition~\ref{def:assertion-boundedness}, the
emitted $z_{\mathrm{rel}}$ is bounded.
\end{proof}

\begin{table*}[t]
\centering
\caption{Adversarial evaluation tracks and target epistemic failure modes.}
\label{tab:epistemic-eval-plan-full}
\footnotesize
\renewcommand{\arraystretch}{0.88}
\begin{tabularx}{\textwidth}{L{2.6cm} L{5.2cm} X}
\toprule
\textbf{Evaluation Track} & \textbf{Target Scenarios} & \textbf{Primary Failure Mode Exposed} \\
\midrule
Origin \& promotion & Model priors, unverified web text, simulations, synthetic summaries & Unauthorized elevation of ungrounded inputs into autobiographical history \\
Source independence & Surface paraphrases, multi-URL duplicates sharing upstream roots & False corroboration and synthetic consensus through redundant retrieval \\
Scope \& disclosure & Recipient-restricted data, task permissions, confidential records & Unauthorized cross-context data leakage and privacy violations \\
Conflict \& time & Contradictory records, expired permissions, retracted evidence & Stale assertion release, unhandled disagreement, and invalidation gaps \\
Memory poisoning & Indirect prompt injections in ingested docs, snippets, and tools & Persistent compromise of agent profile and control flow across sessions \\
Substrate upgrade & Model replacement ($m_1 \to m_2$) under fixed accepted head $h$ & Hallucinatory embellishment or parametric drift masquerading as biography \\
Assertion extraction & Presuppositions, negations, modality shifts, compound clauses & Evading assertion mediation via indirect or non-declarative framing \\
\bottomrule
\end{tabularx}
\end{table*}

\begin{table*}[t]
\centering
\caption{Architectural comparison of objects, enforcement surfaces, and provenance contracts.}
\label{tab:novelty-matrix-full}
\scriptsize
\renewcommand{\arraystretch}{0.88}
\setlength{\tabcolsep}{3.5pt}
\begin{tabularx}{\textwidth}{L{2.7cm} L{2.5cm} L{2.9cm} L{3.3cm} X}
\toprule
\textbf{Framework} & \textbf{Primary Object} & \textbf{Enforcement Surface} & \textbf{Provenance Contract} & \textbf{Model-Replacement Invariant} \\
\midrule
Continuity Kernel~\cite{he2026continuitykernel} & State-transition proposal & Accepted-head activation & Proposal / parent hash chain; epistemic support outside contract & State protocol remains invariant across backbones \\
MemIR~\cite{jin2026memir} & Typed memory atoms & Memory write and projection & Evidence / cue / claim roles; source-root graph not modeled & Not evaluated \\
MemTX~\cite{li2026memtx} & Belief record \& action & Commit gate \& action gate & Record dependency lineage; source-root independence not enforced & Multi-backbone support; no fixed-head invariant \\
ConsistencyGate~\cite{zhang2026consistencygate} & Candidate memory fact & Write-time filter & Self-consistency score; does not track upstream source roots & Multi-backbone support; no fixed-head invariant \\
MiniCheck~\cite{tang2024minicheck} & Output claim atom & Post-generation gate & Document grounding check; source-root graph not modeled & Not evaluated \\
This paper & Claim vertex \& response unit & Typed resolution \& release gate & Exact versioned dependency DAG with policy-relative source roots & Fixed head determines the accepted autobiographical claim set \\
\bottomrule
\end{tabularx}

\vspace{1.8mm}

\caption{Comparison of temporal validity, correction models, disclosure scoping, and release contracts.}
\label{tab:novelty-matrix-release-full}
\scriptsize
\renewcommand{\arraystretch}{0.88}
\setlength{\tabcolsep}{3.5pt}
\begin{tabularx}{\textwidth}{L{2.1cm} L{3.1cm} L{2.9cm} L{3.1cm} X}
\toprule
\textbf{Framework} & \textbf{Temporal / Correction Model} & \textbf{Disclosure Scoping} & \textbf{Assertion Release Contract} & \textbf{Withholding Privacy Safety} \\
\midrule
Continuity Kernel & Fresh parent head; forward state succession & Outside stated contract & Outside stated contract & Outside stated contract \\
MemIR & Temporal grounding; correction not modeled & Retrieval / provenance scope & Normalized fact projection; non-bypass not specified & Not specified \\
MemTX & Record validity \& typed cascade repair & Action-focused permissions & Outward assertion release outside contract & Not specified \\
ConsistencyGate & Temporal validity not modeled & Outside stated contract & Outside stated contract & Not specified \\
MiniCheck & Temporal validity not modeled & Outside stated contract & Claim-level grounding evaluator & Not specified \\
This paper & Validity intervals, transitive invalidation, and emission-head checks & Granular principal, purpose, and relationship scope & Decoupled typed content and status release checks & Recipient-safe uniform status rendering \\
\bottomrule
\end{tabularx}
\end{table*}

\section{Taxonomies and Role Promotion Rules}
\label{sec:appendix-schemas}
\label{sec:appendix-origin-taxonomy}
\label{sec:appendix-promotion-rules}

Table~\ref{tab:claim-origins-full} details the origin taxonomy and provenance guarantees, and
Table~\ref{tab:promotion-examples-full} illustrates instances of the policy-relative judgment
$\mathcal G_h\vdash_\kappa v_s\xrightarrow[D]{\mathrm{prom}}v_t:r_t$.

\section{Proposed End-to-End Evaluation Protocol and Tracks}
\label{sec:appendix-eval-protocol}

Table~\ref{tab:epistemic-eval-plan-full} defines a proposed seven-track adversarial evaluation
design for long-term agent deployments. The present paper does not report this end-to-end study.

\paragraph{Baselines and Ablation Configurations.}
\label{sec:appendix-baselines-ablations}
The proposed study would evaluate five deployment baselines: (1)~a parametric agent using
zero-shot system prompts without episodic memory; (2)~a flat memory store using dense semantic
search over past transcripts without provenance parsing; (3)~standard RAG injecting retrieved
chunks with grounding instructions; (4)~provenance-tagged retrieval without root-DAG analysis or
invalidation checks; and (5)~a state-bounded baseline that filters claim objects at prompt assembly
without post-generation mediation. Retrieval depth, chunking, model, and decoding settings must be
fixed and reported when the study is executed. Proposed ablations isolate source-root independence
($\operatorname{Roots}_\kappa$), transitive dependency invalidation
(Equation~\ref{eq:correction-propagation}), emission-head timing ($\mathsf{HeadUseOK}_\kappa$),
and recipient-safe declassification ($\mathsf{View}_c$).

\paragraph{Falsification Criteria.}
\label{sec:appendix-falsification}
The architecture would be empirically challenged if: (1)~semantic extraction misses covered
assertions above a deployment's declared safety tolerance; (2)~mediation reduces valid-query
coverage beyond its declared utility bound; or (3)~a simpler rule achieves equivalent safety and
coverage at lower latency. These thresholds should be selected before executing the end-to-end
study; the conformance suite does not estimate them.

\section{Conformance Suite Fixture and Dependency Topology}
\label{sec:appendix-traces}

The deterministic conformance suite evaluates the typed resolver and assertion mediator across 24
cases in six threat tracks. Corroboration cases require two roots that satisfy
$\operatorname{Indep}_\kappa(\cdot,\cdot;h)$.

\paragraph{Dependency Topology and Execution.}
The evaluation fixture encodes acyclic dependency topologies through accepted claims, evidence
items, and terminal-root identifiers; the reference resolver consumes flattened root references
rather than traversing a general $\mathcal G_h$. The cases cover model priors,
prompt injection, simulated or hallucinated origins, overlapping root topologies, scope and erasure,
temporal bounds, conflict, invalidation, supersession, and emission-head races, alongside five
supported controls. Table~\ref{tab:epistemic-trace-oracle}
(Section~\ref{sec:epistemic-threats-evaluation}) specifies the per-case oracle, fixture roots, and
mediation dispositions.

\section{Sociotechnical Governance, Privacy, and Erasure}
\label{sec:appendix-governance}

\subsection{Consent Revocation and Existence Leakage}
\label{sec:appendix-consent-leakage}

\begin{proposition}[Existence Side Channel under Distinguishable Rendering]
Let $\mathcal O(Q,c)$ be a deterministic output class observable to an adversary who knows the
rendering policy and can distinguish the classes used for $\mathsf{Unknown}$ and
$\mathsf{Withheld}$. If those internal branches always map to distinct output classes, observing
$\mathcal O(Q,c)$ reveals which branch occurred for that query.
\end{proposition}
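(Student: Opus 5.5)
The plan is to treat the statement as an injectivity argument on a deterministic map restricted to the two branches. Fix a query $Q$ and context $c$. Let $b\in\{\mathsf{Unknown},\mathsf{Withheld}\}$ be the internal branch taken by the resolver. The branch is $\mathsf{Withheld}$ exactly when Equation~\ref{eq:withheld-detection} fires, and $\mathsf{Unknown}$ otherwise, among decisions with no qualifying support. By hypothesis the rendering policy is deterministic and known to the adversary, so it induces a function $g_c$ from branches to output classes with $\mathcal O(Q,c)=g_c(b)$. The assumption that the two branches always map to distinct classes says precisely that $g_c(\mathsf{Unknown})\ne g_c(\mathsf{Withheld})$, i.e.\ $g_c$ is injective on this two-element domain.

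The steps, in order, are as follows.

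\textbf{Step 1.} Formalize the adversary's knowledge as the pair of classes $g_c(\mathsf{Unknown})$ and $g_c(\mathsf{Withheld})$. This follows from knowing the rendering policy and the context parameters that determine it.

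\textbf{Step 2.} Since the adversary can distinguish these classes, it can test membership of the observed $\mathcal O(Q,c)$ in each of them.

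\textbf{Step 3.} Define the decoder $\hat b=g_c^{-1}(\mathcal O(Q,c))$. This is well defined because $g_c$ is injective and its image is the two-element set. Conclude that $\hat b=b$, so the observation reveals the branch.

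\textbf{Step 4.} Remark on the consequence. By Equation~\ref{eq:withheld-detection}, revealing $b=\mathsf{Withheld}$ reveals that $\operatorname{Rel}(Q,\mathcal A_h)\setminus\operatorname{Rel}(Q,\mathcal B_h(p,u,s))\ne\varnothing$. That is the existence leak. The result is therefore the contrapositive counterpart of Equation~\ref{eq:view-collapse}: under $\operatorname{Collapse}_\kappa(c)$ the two decisions share $\mathsf{View}_c$, and hence, via Equation~\ref{eq:render-noninterference}, they share a render set, so the map is not injective.

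The main obstacle is making ``output class'' and ``deterministic'' precise enough that $g_c$ really is a function of the branch alone. The renderer could also depend on other parts of $d$ or of $c$, and a randomized render set would break the decoding. I would handle this by defining $\mathcal O(Q,c)$ as the class of the render set, equivalently of $\mathsf{View}_c(d)$, and by holding $Q$ and $c$ fixed. Any residual variation in $d$ within a branch is then absorbed, because the hypothesis says every decision in one branch lands in a class disjoint from every decision in the other.
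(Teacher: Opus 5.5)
Your argument is correct and matches the paper: the paper gives no separate proof, treating the statement as immediate (a deterministic, known map that sends the two branches to disjoint output classes can be inverted by membership testing), and its only accompanying reasoning is your Step 4 remark that $\operatorname{Collapse}_\kappa(c)$ forces the common view $\bot_{unk}$ and, by render non-interference, a common render set that closes the channel. Your closing refinement is the right way to read ``always map to distinct output classes'' when the decision $d$ varies within a branch: decode by membership in disjoint sets of classes rather than by literally inverting a two-point function.
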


When policy requires those branches to remain indistinguishable, Equation~\ref{eq:view-collapse}
maps the corresponding decisions to $\bot_{unk}$. Equation~\ref{eq:render-noninterference} then gives
that common view one render set, closing this branch-distinguishing channel under the proposition's
observation assumptions.

\subsection{Relational Privacy and Multi-Party Contexts}
\label{sec:appendix-relational-privacy}

A deployment may enforce three relational privacy rules: (1)~\textbf{Isolation:} claim objects
partition private statements from shared context; (2)~\textbf{Relational Consent:} disclosing
$p_j$'s statements to $p_i$ requires authorization from $p_j$; and (3)~\textbf{Asymmetric
Revocation:} admitted consent changes update later projections without rewriting historical
state.

\subsection{Erasure-Aware Lineage Design}
\label{sec:appendix-gdpr-erasure}

One possible deployment pattern separates evidence retention from lineage metadata: encrypted
evidence may be made inaccessible through key destruction; affected nodes then transition to
\textsf{Unavailable}; derived claims are recomputed under
Equation~\ref{eq:correction-propagation}; and non-sensitive lineage metadata may be retained where
policy permits. Whether this pattern satisfies GDPR Article 17 or another erasure obligation
depends on the jurisdiction, controller obligations, data model, and implementation. The present
artifact neither implements cryptographic shredding nor establishes legal compliance.

\section{Related Work Comparison}
\label{sec:appendix-related-work}

Tables~\ref{tab:novelty-matrix-full} and~\ref{tab:novelty-matrix-release-full} summarize the
comparison with selected related systems.

\end{document}